\documentclass{article}

\usepackage[preprint]{neurips_2026}

\usepackage[utf8]{inputenc} %
\usepackage[T1]{fontenc}    %
\usepackage{hyperref}       %
\usepackage{url}            %
\usepackage{booktabs}       %
\usepackage{amsfonts}       %
\usepackage{nicefrac}       %
\usepackage{microtype}      %
\usepackage{xcolor}         %
\usepackage{listings}
\usepackage{xcolor}
\usepackage{graphicx}
\usepackage{booktabs}
\usepackage{multirow}
\usepackage{tcolorbox}
\usepackage{listings}
\usepackage{microtype}
\usepackage{caption}
\usepackage{xcolor}
\usepackage{xspace}
\usepackage{amsmath,amssymb}%
\usepackage{amsthm}
\usepackage{mathtools}
\usepackage{bm}

\newtheorem{assumption}{Assumption}
\newtheorem{condition}{Condition}
\newtheorem{remark}{Remark}
\newtheorem{proposition}{Proposition}

\usepackage{wrapfig}
\usepackage{inconsolata}
\usepackage[T1]{fontenc}
\usepackage{microtype}
\usepackage{tikz, xcolor}
\usetikzlibrary{positioning, arrows.meta}
\usepackage{minted}
\usepackage{subcaption}
\newcommand{\methodname}{\textsc{MaskGen}\xspace}

\newcommand{\eg}{\emph{e.g.}\xspace}
\newcommand{\ie}{\emph{i.e.}\xspace}

\usepackage[utf8]{inputenc}
\usepackage{amssymb}

\definecolor{addedFg}{HTML}{1A7F37}
\definecolor{diffgreen}{rgb}{0.88, 1.0, 0.88}

\newcommand{\subpara}[1]{\textbf{#1}\xspace}

\title{Why Invariance is Not Enough for Biomedical Domain Generalization and How to Fix It}

\author{%
  Sebo Diaz \\
  MIT \\
  \texttt{sdd@mit.edu} \\
  \And
  Polina Golland \\
  MIT \\
  \texttt{polina@mit.edu} \\
  \And
  Elfar Adalsteinsson \\
  MIT \\
  \texttt{elfar@mit.edu} \\
  \And
  Neel Dey \\
  MIT, MGH, HMS \\
  \texttt{ndey@mgh.harvard.edu} \\
}

\begin{document}

\maketitle

\begin{figure}[h]
    \centering
    \includegraphics[width=0.975\linewidth]{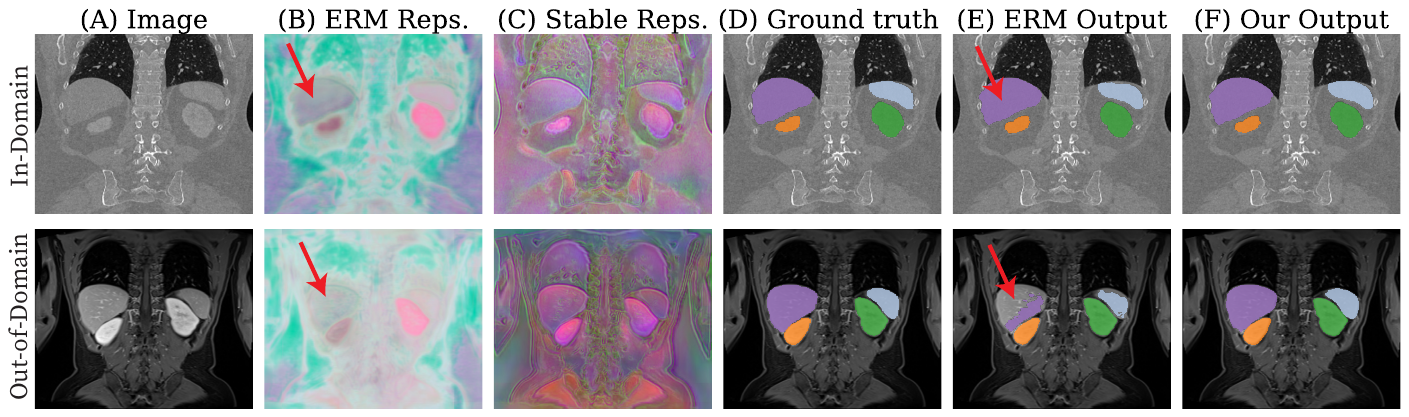}
    \caption{\small{
    \textbf{Training on Stable Representations.} 
    When standard ERM models are trained on one domain (A, top) but tested on another (A, bottom), their representations are unstable under domain shifts (B).
    While standard training is performant in-domain (E, top), this instability significantly degrades performance on out-of-distribution images (E, bottom). \methodname provides a theoretically grounded and easy-to-implement framework that exploits \textit{both} in-domain image intensities (A, top) and stable, domain-invariant representations extracted by foundation models (C, top) to train robust models that effectively generalize to new domains automatically without any adaptation (F, bottom). Representations visualized by mapping 3 arbitrary channels to RGB.
    }}
    \label{fig:placeholder}
\end{figure}

\begin{abstract}
    We present \methodname, a theoretically grounded and deliberately simple approach for domain generalization in 3D biomedical image segmentation. Modern segmentation models degrade sharply under shifts in modality, disease severity, clinical sites, and more, limiting their reliable adoption. Existing generalization methods address this using extreme augmentations, hand-engineered domain statistics mixing, or architectural redesigns that add significant implementation overhead while yielding inconsistent performance across biomedical settings. \methodname instead presents a principled learning strategy with marginal overhead that utilizes both source-domain image intensities and domain-stable foundation model representations to train robust segmentation models. As a result, \methodname achieves strong gains in both fully supervised and few-shot segmentation across broad clinical shifts in biomedical studies. Unlike prior approaches, \methodname is architecture- and loss-agnostic, compatible with standard augmentation pipelines, easy to implement, and tackles arbitrary anatomical regions. Its implementation is freely available at \href{https://github.com/sebodiaz/MaskGen}{https://github.com/sebodiaz/MaskGen}.
\end{abstract}

\section{Introduction}
Medical image segmentation is considered largely solved in controlled environments: with a few hundred labeled examples, strong augmentation, and a good training recipe, models achieve high performance when the test data matches the training domain~\citep{isensee2021nnu,isensee2024nnu}. Unfortunately, real-world deployment is far less forgiving. Annotated medical datasets are often small, and domain shifts in patient populations, imaging sites, acquisition parameters, and modalities all significantly degrade performance, forcing costly, time-consuming cycles of retraining and additional manual annotation.

\looseness=-1
Domain generalization (DG) methods address these shifts during training, removing the need for retraining on distribution-shifted data. They typically operate at the data- or model-level. Data-level DG expands training diversity through extreme augmentations~\cite{ouyang2022causality} or style transfer~\cite{zhou2021domain}, but cannot guarantee coverage of unseen target domains and may conflict with standard augmentation pipelines. Model-level DG mixes feature statistics across domains~\cite{zhou2021domain,zhou2022generalizable}, but is computationally expensive and assumes that target data lies within the span of training domains. Semi- and unsupervised domain adaptation methods sidestep these assumptions by finetuning or training image translation models on each new target domain at test time~\cite{chen2019synergistic,chen2022contrastive,karani2021test, wang2020tent}, but demand machine learning expertise, infrastructure, and repeated effort for each new domain, limiting their practicality for biomedical users.

\looseness=-1
Recent foundation models for interactive biomedical segmentation perform well by training on large, diverse datasets using prompts such as clicks, scribbles, and bounding boxes~\cite{he2024vista3d,isensee2025nninteractive,wong2025multiverseg,wong2024scribbleprompt}. However, they do not scale to automated workflows: prompting thousands of images is clinically infeasible, and these models often struggle on underrepresented domains even with user input. In contrast, anatomix~\cite{dey2024learning} is a synthetically trained 3D biomedical foundation model that extracts task-agnostic \textit{representations} that are stable under nuisance imaging variations (\eg, scanner types, modalities, acquisition protocols). While intuition suggests that training a model on these domain-invariant representations (instead of raw image intensities) could enable DG, we find that this strategy does not outperform just using the image as is when trained with extensive data augmentation (Sec. \ref{subsec:analysis_and_ablations} and Fig. \ref{fig:dropout_sensitivity}).

This performance and usability gap in the DG literature motivates the need for combining stable, domain-invariant representations with in-domain image intensities to train robust predictors that generalize. Prior test-time-adaptation work~\cite{eastwood2023spuriosity,anti-causal} proposes methods that combine stable and unstable representations to improve robustness. However, these approaches assume access to test data by design and require finetuning or adaptation on each task, which, again, cannot be practically executed in the clinic. In this paper, we develop an algorithm that trains robust segmentors combining in-domain and domain-invariant representations, \textit{without} needing any test-time adaptation.

\begin{figure}[t]
\centering
\begin{subfigure}{0.67\linewidth}
\centering
\begin{minted}[
  highlightlines={1,5-8},          %
  highlightcolor=diffgreen,         %
  linenos,
  numbersep=2pt,
  breaklines,
  fontsize=\footnotesize, %
]{python}
feature_ex = load_feature_extractor(trainable=False) # f_phi
train_net = initialize_model(trainable=True) # h_theta
for (images, labels) in dataloader: # (x_u, Y)
    optimizer.zero_grad()
    with torch.no_grad():
        features = feature_ex(images) # x_s
        inputs = torch.cat([images, features], dim=1) # z
        inputs = F.dropout3d(inputs, p=drop_prob) # \tilde{z}
    outputs = train_net(inputs)
    loss = loss_fn(outputs, labels)
    loss.backward()
    optimizer.step()
\end{minted}
\label{fig:train_loop_diff}
\end{subfigure}
\hspace{0.02\linewidth}
\begin{subfigure}{0.3\linewidth}
\centering
\begin{tikzpicture}[
    scale=1.2,
    transform shape,
    node distance=1.2cm and 0.8cm,
    >={Stealth[length=4pt, width=3pt]},
    every node/.style={font=\scriptsize},
    latent/.style={
        circle, draw, minimum size=20pt, 
        inner sep=0pt, font=\large
    },
]
    \definecolor{stableblue}{HTML}{2563EB}
    \definecolor{spuriousred}{HTML}{DC2626}
    \definecolor{labelyellow}{HTML}{BA8E23}
    \definecolor{envgray}{HTML}{4B5563}
    \node[latent, draw=labelyellow, thick] (Y) {$Y$};
    \node[latent, draw=envgray, thick, right=of Y] (E) {$E$};
    \node[latent, draw=stableblue, thick, below=of Y] (Xs) {$X_s$};
    \node[latent, draw=spuriousred, thick, below=of E] (Xu) {$X_u$};
    \draw[->, thick, stableblue] (Y) -- (Xs)
        node[midway, sloped, above, font=\tiny, text=stableblue] {invariant};
    \draw[->, thick] (Y) -- (Xu);
    \draw[->, thick, spuriousred] (E) -- (Xu)
        node[midway, sloped, above, font=\tiny, text=spuriousred] {shifts};
    \node[below=0.5pt of Xs, font=\tiny, text=stableblue] {stable};
    \node[below=0.5pt of Xu, font=\tiny, text=spuriousred] {unstable};
    \node[above=1pt of E, font=\tiny, text=envgray] {environment};
    \node[above=1pt of Y, font=\tiny, text=labelyellow] {label};
\end{tikzpicture}
\label{fig:pgm}
\end{subfigure}
\hspace{-4em}
\caption{\textbf{Method.} \textbf{Left}: \methodname is exceedingly simple to implement. Given a standard PyTorch training loop, the \textcolor{addedFg}{green} lines are the only additions required. \textbf{Right}: The probabilistic graphical model we assume for domain generalization. Label $Y$ generates both stable $X_s$ and unstable $X_u$ variables and the environment $E$ influences only $X_u$.} %
\label{fig:method_overview}
\end{figure}

\looseness=-1
\subpara{Contributions.} We present \methodname, a deliberately simple and effective domain generalization method for arbitrary domain shifts with minimal overhead. \methodname trains robust segmentation models that jointly exploit \textit{both} in-domain image intensities and stable, domain-invariant representations extracted by frozen, non-trainable foundation models~\cite{dey2024learning}. We provide a theoretical and empirical analysis of stable and unstable feature combination under distribution shift and find that naïve combination induces shortcut learning, where models over-rely on in-domain unstable features and fail to generalize. To achieve the best of both stability and instability, we propose a simple, principled regularization mechanism to prevent this shortcut and force the model to jointly exploit both information sources. Experimentally, \methodname achieves excellent performance across multiple biomedical contexts and data availability settings, enabling reliable segmentation when applied to new domains. Further, while \methodname does not use test-time adaptation, it performs competitively with methods that do retrain on (unlabeled) target-domain data, greatly enhancing its practicality for biomedical users.

\section{Related Work}
\label{sec:related_work}
\noindent\textbf{Domain Generalization (DG).} DG methods aim to learn domain-invariant segmentors \underline{before} deployment. Data-level DG methods often use strong data augmentation to expand the training domain, such as convolving images with randomly initialized filters~\cite{ouyang2022causality,xu2020robust}, stitching samples~\cite{zhang2017mixup}, and incorporating spatial mixing~\cite{ouyang2022causality}, among others~\cite{devries2017improved}. Other model-level DG approaches modify architectures or losses to reduce sensitivity to spurious cues~\cite{huang2020self,zhou2021domain,zhou2022generalizable}. When domain labels are available (\eg, imaging sites or sequences), DG methods can explicitly incorporate domain differences into their modeling or architectures~\cite{hu2022domain,liu2021feddg,tiwary2025langdaug,zhou2022generalizable}. However, DG methods often add non-trivial computational overhead, do not reliably transfer across applications, may conflict with standard training pipelines, and explicit domain labels are rare in medical imaging datasets. \methodname avoids this dependency on domain labels, requires marginal extra computation, and is fully generic w.r.t. target applications, improving both the practicality and deployment flexibility of DG segmentation.

\subpara{Unsupervised Domain Adaptation (UDA).} UDA methods address domain shifts by adapting models \textit{at} deployment using unlabeled test data. Many approaches optimize surrogate losses that encourage confident or consistent predictions, leveraging auxiliary tasks~\cite{sun2020test}, entropy minimization~\cite{wang2020tent}, batch statistics updates~\cite{valanarasu2024fly,dong2024medical}, augmentation consistency~\cite{weihsbach2025dg}, or contrastive regularization~\cite{chen2022contrastive}. Others focus on aligning feature distributions, either statistically~\cite{chen2019synergistic,kang2019contrastive,karani2021test,sun2016deep,zheng2024dual} or via adversarial training~\cite{tzeng2017adversarial}. Pseudo-labeling methods similarly use confident predictions as weak supervision~\cite{wu2024fpl+, zhang2024iplc, Zhang_2024_CVPR, Zhao_2024_WACV, zou2019confidence}, but often fail beyond their intended domain shifts and anatomical regions and require considerable machine learning resources to deploy. In contrast, \methodname does not require test-time adaptation and generically handles arbitrary domain shifts and applications.

\subpara{Promptable Foundation Models (PFMs).} By training on broad aggregated datasets, recent large-scale promptable models~\cite{kirillov2023segment,ravi2024sam} generate segmentation masks conditioned on user inputs such as clicks, bounding boxes, or text, enabling flexible segmentation without task-specific retraining~\cite{butoi2023universeg,hoopes2025voxelprompt,isensee2025nninteractive,wong2024scribbleprompt}. However, most PFMs operate on 2D slices and require per-image prompting, limiting their practicality for large studies involving thousands of images. Further, they are unsuitable for general-purpose feature extraction as they can underperform on distribution-shifted data. %

\looseness=-1
\subpara{Exploiting Stable Representations.} Prior work in non-medical settings has explored obtaining and leveraging stable, domain-invariant representations by solving a constrained optimization problem to minimize risk across all training domains \cite{arjovsky2019invariant}. However, these representations can underperform in practice, as optimizing for cross-domain stability discards valuable domain-specific information~\cite{eastwood2023spuriosity, rosenfeld2020risks, shui2022benefits, zhao2022fundamental}.The challenge, then, is not choosing between invariant and domain-specific information but effectively utilizing both. Recent work trains two separate models, one capturing stable and another capturing unstable representations, and combines the models at test time~\cite{eastwood2023spuriosity, anti-causal}. %
However, these two-stage, two-model pipelines are both computationally expensive and may require careful domain-specific tuning at test time. \methodname instead learns a \textit{single} predictor at training time alone.

\section{Methods and Analysis}
\label{sec:analys_and_methods}
\noindent\textbf{Preliminaries.} Let $\mathcal{X}$ and $\mathcal{Y}$ be the input and label spaces, respectively. Consider data generated from multiple environments, where each environment $e \in \mathcal{E}$ (\eg, scanner, modality) induces a joint distribution $\mathbb{P}_{e}$ over $\mathcal{X}\times\mathcal{Y}$, with training environments $\mathcal{E}_{train} \subseteq \mathcal{E}$. We assume that environment labels of $\mathcal{E}_{train}$ are not available during training, as clinical datasets often lack reliable or relevant metadata. A predictor $h_{\theta}$ observes only pooled data $D_{\mathrm{train}} = \{(x_i, y_i)\}_{i=1}^N$, where each latent environment $e_i \in \mathcal{E}_{\mathrm{train}}$ is unobserved. Empirical Risk Minimization (ERM)~\cite{vapnik1999overview} optimizes $\theta^* = \arg\min_{\theta} \mathbb{E}_{(x,y) \sim \bar{\mathbb{P}}}\bigl[\ell(h_{\theta}(x),y)\bigr],$ where $\ell$ is the loss and $\bar{\mathbb{P}}$ is a mixture of training environments. 
We aim to learn $h_{\theta}$ that generalizes to unseen environments $\mathcal{E}_{\mathrm{test}} \subseteq \mathcal{E} \setminus \mathcal{E}_{\mathrm{train}}$ without access to any subset of $\mathcal{E}$ at train or test time. We state the algorithm in Sec.~\ref{subsec:method} and provide analysis in Sec.~\ref{subsec:analysis}.

\subsection{Methods} \label{subsec:method}
\methodname is outlined in Figure~\ref{fig:method_overview} alongside our graphical model assumptions. It is  comprised of three key components: domain-invariant representation extraction, combination with domain-specific image intensities, and regularizing the combination with view masking to prevent shortcuts.

\subpara{Stable Representation Extractor.}
Let $f_{\phi}: \mathcal{X} \rightarrow \mathcal{S}$ be a representation extractor parameterized by $\phi$, mapping inputs $x \in \mathbb{R}^{1\times H\times W \times D}$ to representations $s = f_{\phi}(x) \in \mathbb{R}^{d\times H\times W \times D}$ where $d$ is the embedding dimension and $H$, $W$, and $D$ indicate spatial dimensions. Modern foundation models obtain transferable representations that can be used in the domain generalization setting~\cite{dey2024learning, isensee2025nninteractive}. In particular, the extractor we use in our experiments produces approximately invariant (stable) representations~\cite{dey2024learning}, \ie, $f_{\phi}(x_e) \approx f_{\phi}(x_{e'}) \ \forall \ e,e'\in \mathcal{E}$, as quantitatively verified in Appendix~\ref{app:representation_analysis}.

\subpara{Combining Information Sources.}
At test time, prior work combines $\mathcal{S}$ and $\mathcal{X}$ by merging logits from models trained separately on $\mathcal{S}$ and $\mathcal{X}$ or by using a model trained on $\mathcal{S}$ as a pseudo-labeler to finetune a model trained on $\mathcal{X}$~\cite{eastwood2023spuriosity, anti-causal}. To avoid test-time adaptation or training separate models for each source of information, we instead provide the predictor with an input that contains both the raw image and the representation extractor's outputs: $z = \text{concat}(x,\, f_{\phi}(x))
      \in \mathbb{R}^{(1+d) \times H\times W \times D}.
$
The predictor $h_{\theta}: \mathcal{Z} \rightarrow \mathcal{Y}$ then operates on $z$ directly. We write $x_u := x$ for the \emph{unstable} source and $x_s := f_{\phi}(x)$ for the \emph{stable} source, such that $z = (x_u, x_s)$. In the analysis that follows, we use $X_u, X_s,$ and $Z$ to denote the corresponding random variables.

\looseness=-1
\subpara{Regularizing Feature Combination.} Directly training $h_{\theta}$ on $z = (x_u, x_s)$ using ERM does not produce generalization. As $x_u$ has in-domain signal, the predictor trained with an in-domain loss (ERM) is incentivized to rely on $x_u$ to minimize the ERM loss and ignore the domain-invariant features $x_s$. To prevent this, we randomly mask different channels (features) of $z$ during training to produce random partial \textit{views} of $z$. As $h_{\theta}$ is now trained on only partial views that may or may not include $x_u$, it cannot ignore the stable inputs $x_s$ and must use them, as shown by our theoretical analysis below. In practice, we stochastically mask each of the $(1+d)$ channels in $z$ at a rate $r$.

\subsection{Analysis} \label{subsec:analysis}
While masking views of $z$ might intuitively prevent over-reliance on either information source, does it structurally prevent over-reliance on $X_u$? Further, is this added complexity over training on the domain-invariant $X_s$ alone justified? Here, we investigate the following questions:
\begin{enumerate}
    \item Does masking views of $z$ theoretically force $h_{\theta}$ to use $X_{s}$? (Proposition~\ref{prop:stationarity})
    \item Does training on $(X_u, X_s)$ improve over training on $X_s$ alone? (Proposition~\ref{prop:ceiling})
\end{enumerate}

Below, $H(\cdot)$ denotes entropy, $H(\cdot \mid \cdot)$ conditional entropy, $H_e(\cdot)$ and $H_e(\cdot \mid \cdot)$ denotes entropy and conditional entropy under a particular environment $e$, $I(\cdot \,;\, \cdot )$ mutual information, $\star$ is cross-correlation, $\ell$ is a proper scoring rule, and $\sigma$ is a pointwise non-linearity.
As $h_{\theta}$ takes the concatenated input $(X_u, X_s)$, its first layer admits a partition of weights into slices $W_u$ and $W_s$ acting on the unstable and stable inputs, respectively. 
This holds when the first layer is linear in the input, as in CNNs (where $W_u,W_s$ are kernel slices) and ViTs (where $W_u,W_s$ are slices of the shared embedding matrix). We assume the following (with the first 3 assumptions following~\cite{eastwood2023spuriosity}):
\begin{assumption}[Stable inputs]
\label{ass:stable}
$Y \perp E \mid X_s$, \ie, the stable inputs encode information about $Y$ that is environment-invariant.
\end{assumption}
\begin{assumption}[Unstable inputs]
\label{ass:unstable}
$Y \not\perp E \mid X_u$, \ie, the relationship between the in-domain input $X_u$ and $Y$ depends on the environment.
\end{assumption}
\begin{assumption}[Complementary information]
\label{ass:complementarity}
$Y \not\perp_e X_u \mid X_s$ for every $e \in \mathcal{E}$, where $\perp_e$ denotes conditional independence under $\mathbb{P}_{e}$,
\ie, in each environment $X_u$ carries task-relevant 
information beyond what $X_s$ provides.
\end{assumption}
\begin{assumption}[Informative stable inputs]
\label{ass:informative}
$I(Y;\, X_s) > 0$, \ie, $X_s$ possesses nonzero predictive information about $Y$.
\end{assumption}
Assumptions~\ref{ass:stable}--\ref{ass:complementarity} are explicit in the graphical model (Fig.~\ref{fig:method_overview}, right). Assumption~\ref{ass:informative} is to preclude trivially domain-stable inputs (\eg, zero-valued features no matter the environment indexed).

\begin{remark}[Shortcut Learning]
\label{rem:temptation}
 Assumptions~\ref{ass:unstable} and~\ref{ass:complementarity} imply that $X_u$ carries in-domain predictive signal for $Y$, but that this relationship shifts across environments. 
 Given both $(X_u, X_s)$, the model is incentivized to primarily use $X_u$ due to its predictive in-domain signal when training with an in-domain ERM loss, thereby ignoring $X_s$ and failing on test data when the environment shifts.
\end{remark}

We hypothesize that shortcut learning occurs not because the stable inputs are uninformative across environments but due to the training objective focusing only on the training environments, which is empirically validated in Section~\ref{sec:experiments}. Below, we show that view masking as a feature combiner is a principled strategy to regularize the joint use of $X_u$ and $X_s$ and mitigate these harmful effects.

\subpara{View Masking as a Feature Combiner.}
Here, we make the simplifying assumption that random view masking either masks all feature channels of $f_{\phi}$ or just the original input $x$. Let $M = (M_u, M_s) \in \{0,1\}^{2} \setminus \{(0,0)\}$, be a random view mask sampled independently of $(x,y)$. Writing the masked input as $\tilde{z} := (M_u\, x_u,\; M_s\, x_s)$, the training objective is
\(
\mathcal{L}(\theta)
= \mathbb{E}_{(x,y)\sim\bar{\mathbb{P}}}\;
      \mathbb{E}_{M}\!
      \bigl[\, \ell\bigl(h_\theta(\tilde{z}),\, y\bigr) \bigr].
\)
Let $\mathcal{R}_{\mu}(\theta) := \mathbb{E}_{(x,y)\sim \bar{\mathbb{P}}}\bigl[\ell\bigl(h_\theta(\mu \odot z),\, y\bigr)\bigr]$ denote the risk under a fixed mask $\mu \in \{(1,1),\,(1,0),\,(0,1)\}$, and let
$\pi_\mu := \mathbb{P}(M\!=\!\mu)$.  Then, the training objective additively decomposes as,
\begin{equation}
    \mathcal{L}(\theta)
    = \sum_{\mu}
      \pi_\mu \, \mathcal{R}_\mu(\theta),
    \qquad
    \mu \in \{(1,1),(1,0),(0,1)\}.
    \label{eq:risk-decomp}
\end{equation}
which is a convex combination of the joint, unstable, and stable risks, respectively. Note that, with view masking, $\mathcal{R}_{(0,1)}$ penalizes solutions that ignore $x_s$, directly counteracting shortcut learning. %

\begin{condition}[Non-conflicting interaction]
\label{cond:nai}
For any $\theta^*$ where $h_{\theta^*}$ ignores $X_s$, the joint risk $\mathcal{R}_{(1,1)}$ gradient w.r.t. the stable channel kernel slices $W_s$ does not oppose the corresponding gradient of the stable-only risk $\mathcal{R}_{(0,1)}$:
\begin{equation}
\Bigl\langle
        \nabla_{W_s} \mathcal{R}_{(1,1)}(\theta^*),\;
        \nabla_{W_s} \mathcal{R}_{(0,1)}(\theta^*)
    \Bigr\rangle
    \;\geq\; 0.
\end{equation}
\end{condition}

\begin{remark}
The condition is mild. At any $\theta^*$ ignoring $X_s$, the first-layer activation under mask $(1,1)$ reduces to $\sigma(W_u \star X_u)$, identical to the $(1,0)$ regime. Activating $W_s$ introduces an additive signal $W_s \star X_s$ inside the nonlinearity that carries information about $Y$ (Assumption~\ref{ass:informative}) and is environment-invariant (Assumption~\ref{ass:stable}). Violating this condition would implausibly require that this additional informative, domain-stable signal actively degrades prediction and the model is harmed by using strictly more task-relevant information. This condition is also empirically verified in Appendix~\ref{app:condition_1}. %
\end{remark}

\begin{proposition}[Stationarity forces use of stable inputs]
\label{prop:stationarity}
Given Assumptions~\ref{ass:stable}--\ref{ass:informative}, let $h_{\theta}$ be a model whose first layer computes: $a^{(1)} = \sigma( W_{u} \star X_{u} + W_{s} \star X_{s} )$
where $W_u, W_s$ denote the first-layer kernel slices corresponding to the unstable and stable input channels, respectively, and are differentiable w.r.t. $\theta$. Suppose the function class $\{h_{\theta}(\mathbf{0}, \cdot) : \theta \in \Theta \}$ can realize the Bayes-optimal predictor for $(X_s, Y)$ and that $\mathcal{R}_{(0,1)}$ has no spurious stationary points w.r.t. $W_s$. Given Condition~\ref{cond:nai}, if $\pi_{(0,1)} > 0$, then at no stationary point $\theta^*$ of $\mathcal{L}$ can $h_{\theta^*}$ ignore $X_{s}$
\textbf{(Proof provided in Appendix~\ref{app:proofs})}.%
\end{proposition}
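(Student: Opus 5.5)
We argue by contradiction. Suppose $\theta^*$ is a stationary point of $\mathcal{L}$ at which $h_{\theta^*}$ ignores $X_s$; concretely, the first-layer activation reduces to $\sigma(W_u \star X_u)$, so that $W_s^* = 0$ (or more generally $W_s^*$ lies in a set where $W_s \star X_s$ has no effect on the output). Stationarity of $\mathcal{L}$ gives in particular $\nabla_{W_s}\mathcal{L}(\theta^*) = 0$. By the decomposition \eqref{eq:risk-decomp}, this reads
\begin{equation*}
\pi_{(1,1)}\nabla_{W_s}\mathcal{R}_{(1,1)}(\theta^*) + \pi_{(1,0)}\nabla_{W_s}\mathcal{R}_{(1,0)}(\theta^*) + \pi_{(0,1)}\nabla_{W_s}\mathcal{R}_{(0,1)}(\theta^*) = 0 .
\end{equation*}
The first observation is that $\nabla_{W_s}\mathcal{R}_{(1,0)} \equiv 0$: under mask $(1,0)$ the stable channels are zeroed, so $W_s \star (0\cdot X_s) = 0$ and $\mathcal{R}_{(1,0)}$ does not depend on $W_s$ at all.

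The second step is to show $g_s := \nabla_{W_s}\mathcal{R}_{(0,1)}(\theta^*) \neq 0$. Under mask $(0,1)$ with $h_{\theta^*}$ ignoring $X_s$, the predictor $h_{\theta^*}(\mathbf{0}, X_s)$ is a constant in $X_s$ (its first layer sees only $\sigma(0)$), so its risk is at best that of the marginal predictor, $H(Y)$ for log loss or the analogous generalized entropy for a proper scoring rule $\ell$. Since the class $\{h_\theta(\mathbf{0},\cdot)\}$ realizes the Bayes predictor for $(X_s,Y)$, whose risk is $H(Y\mid X_s) = H(Y) - I(Y;X_s) < H(Y)$ by Assumption~\ref{ass:informative} (using strict propriety of $\ell$ for the general case), $\theta^*$ is not a minimizer of $\mathcal{R}_{(0,1)}$ over $W_s$. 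Because $\mathcal{R}_{(0,1)}$ has no spurious stationary points w.r.t.\ $W_s$, $\theta^*$ cannot be stationary in $W_s$ for $\mathcal{R}_{(0,1)}$, hence $g_s \neq 0$.

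Finally, take the inner product of the stationarity equation with $g_s$:
\begin{equation*}
0 = \pi_{(1,1)}\bigl\langle \nabla_{W_s}\mathcal{R}_{(1,1)}(\theta^*), g_s\bigr\rangle + \pi_{(0,1)}\|g_s\|^2 .
\end{equation*}
By Condition~\ref{cond:nai} the first term is $\geq 0$, and since $\pi_{(0,1)}>0$ and $g_s\neq 0$ the second is strictly positive, which is a contradiction. Assumptions~\ref{ass:stable}--\ref{ass:complementarity} enter only to motivate the setting (and Condition~\ref{cond:nai}); the load-bearing hypotheses are Assumption~\ref{ass:informative}, realizability, and the no-spurious-stationary-points assumption. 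The main obstacle I expect is the second step: precisely pinning down "ignores $X_s$" so that $h_{\theta^*}(\mathbf{0},\cdot)$ is genuinely constant, and arguing that "no spurious stationary points" is applied to $W_s$ with other parameters held at $\theta^*$. I would first try the clean case $W_s^*=0$, then note that realizability is needed in the form that some direction in $W_s$ alone (holding deeper layers fixed) decreases $\mathcal{R}_{(0,1)}$, which is exactly what the no-spurious-stationary-point hypothesis is meant to supply.
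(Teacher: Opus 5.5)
Your proposal is correct and follows the paper's proof essentially step for step. Both argue by contradiction at a stationary point and note that $\nabla_{W_s}\mathcal{R}_{(1,0)}=0$ because the stable channels are zeroed. Both get $\nabla_{W_s}\mathcal{R}_{(0,1)}(\theta^*)\neq 0$ from the constant-predictor bound $H(Y)>H(Y\mid X_s)$ together with the no-spurious-stationary-points hypothesis. Both then pair the stationarity equation with that gradient, so that Condition~\ref{cond:nai} forces a strict sign contradiction; the paper pairs with $v_s=-\nabla_{W_s}\mathcal{R}_{(0,1)}(\theta^*)$, which is equivalent. You flag a weak point: realizability holds over all of $\Theta$, while the descent must come from $W_s$ alone. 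The paper closes this only by the same appeal to the no-spurious-stationary-points hypothesis, and it also uses $H(Y)$ for a general strictly proper scoring rule, where you more carefully say the generalized entropy. So your version is no less rigorous than the original.
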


Proposition~\ref{prop:stationarity} shows that view masking forces the predictor to use the stable inputs. We now turn to the natural follow-up question: why not train on the stable inputs $X_s$ alone? If $X_s$ is environment-invariant (Assumption~\ref{ass:stable}), discarding $X_u$ should eliminate domain shift entirely. The answer is that invariance alone does not guarantee sufficient performance, as shown below.

\begin{proposition}[Stable-only performance ceiling]
\label{prop:ceiling}
Under Assumptions~\ref{ass:stable}--\ref{ass:informative}, $H(Y | X_s)$ is environment-invariant, and the Bayes-optimal stable-only predictor $h_s^*(x_s) := \mathbb{E}[Y | X_s = x_s]$ achieves this value as its risk in every environment. Therefore, in any environment $e$, the Bayes-optimal joint risk $H_e(Y | X_s, X_u) < H(Y | X_s)$, with strict inequality, where $H_e(\cdot)$ denotes entropy under $\mathbb{P}_e$ \textbf{(Proof provided in Appendix~\ref{app:proofs})}.
\end{proposition}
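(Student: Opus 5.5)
The proof has three steps: show that $H(Y\mid X_s)$ does not depend on the environment, identify it as the risk of the stable-only Bayes predictor under log loss, and derive strict improvement for the joint predictor from Assumption~\ref{ass:complementarity}.

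\emph{Step 1: invariance.} Assumption~\ref{ass:stable} says $Y\perp E\mid X_s$, so $\mathbb{P}_e(Y\mid X_s=x_s)$ is the same conditional law $p(Y\mid x_s)$ in every environment. Then
\begin{equation*}
H_e(Y\mid X_s)=\mathbb{E}_{X_s\sim\mathbb{P}_e}\bigl[H(p(\cdot\mid X_s))\bigr].
\end{equation*}
Strictly, this depends on $e$ only through the marginal of $X_s$. To make the statement as written hold, I would adopt the reading implicit in the graphical model of Fig.~\ref{fig:method_overview}: $E$ has no edge into $Y$ or $X_s$, so the joint law of $(Y,X_s)$ is environment-invariant. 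I would state this reading explicitly, since Assumption~\ref{ass:stable} alone does not give it. Under it, $H_e(Y\mid X_s)=H(Y\mid X_s)$ for all $e$.

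\emph{Step 2: the stable-only predictor.} Take $\ell$ to be the log loss, which is the proper scoring rule under which risks are entropies. Propriety gives that $\mathbb{E}_e[\ell(q(X_s),Y)]$ is minimized by $q(\cdot)=p(Y\mid X_s)$. Call this $h_s^*$; it is the statement's $\mathbb{E}[Y\mid X_s]$ in the Bernoulli per-voxel case. Its risk equals $H_e(Y\mid X_s)$, which by Step 1 is the same value $H(Y\mid X_s)$ in every environment.

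\emph{Step 3: strict inequality.} Fix $e$. The chain rule gives
\begin{equation*}
H_e(Y\mid X_s)-H_e(Y\mid X_s,X_u)=I_e(Y;X_u\mid X_s)\ge 0.
\end{equation*}
Conditional mutual information is zero exactly when $Y\perp_e X_u\mid X_s$. Assumption~\ref{ass:complementarity} rules this out, so the gap is strictly positive. Applying propriety again, now to the joint predictor $p_e(Y\mid X_s,X_u)$, shows that $H_e(Y\mid X_s,X_u)$ is the Bayes-optimal joint risk. Combining with Step 1 gives $H_e(Y\mid X_s,X_u)<H(Y\mid X_s)$.

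The main obstacle is Step 1, which needs the invariance of the marginal of $X_s$, and I would make that dependence explicit. The remaining care points are minor: identifying entropies with risks requires log loss specifically, and for continuous $X_u$ one uses the measure-theoretic conditional mutual information, for which the "zero iff conditionally independent" fact still holds. Assumption~\ref{ass:informative} only ensures that $H(Y\mid X_s)<H(Y)$, so the ceiling is nontrivial.
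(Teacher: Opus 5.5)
Your proof is correct and follows the paper's own three steps: invariance of $H(Y\mid X_s)$ from Assumption~\ref{ass:stable}, the chain rule $H_e(Y\mid X_s,X_u)=H_e(Y\mid X_s)-I_e(Y;X_u\mid X_s)$, and strictness from Assumption~\ref{ass:complementarity}. Your extra care in Step 1 is warranted and worth keeping. The paper passes directly from $\mathbb{P}_e(Y\mid X_s)=\mathbb{P}(Y\mid X_s)$ to $H_e(Y\mid X_s)=H(Y\mid X_s)$, which silently requires the $X_s$-marginal to be environment-invariant; your appeal to the graphical model, where $(Y,X_s)\perp E$, is exactly what closes that step, and your note that reading entropies as risks requires log loss makes explicit what the paper leaves implicit.
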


\begin{remark}
    Proposition~\ref{prop:ceiling} motivates using both $X_u$ and $X_s$, as discarding $X_u$ imposes an upper bound on performance. In parallel, Proposition~\ref{prop:stationarity} shows that $h_{\theta}$ cannot become solely dependent on $X_u$. Together, they imply 
    that $X_s$ provides a floor on performance and additionally providing $X_u$ improves it in any given environment. 
\end{remark}

This analysis reflects a fundamental limitation of invariant approaches, also supported by previous work~\cite[Theorem 5.1]{rosenfeld2020risks}. %
Therefore, we do not discard the in-domain features $X_u$ but exploit their transferable information, while ensuring the predictor does not learn shortcuts that ignore $X_s$. %

\subpara{Implementation Details.}
Fig.~\ref{fig:method_overview} shows \methodname only needs five additional lines of code over a standard training loop. We implement $h_{\theta}$ as a 3D UNet, which remains competitive with transformer-based alternatives~\cite{isensee2024nnu}. The model has $36.9$M parameters with an initial embedding channel dimension of $20$, doubled at every resolution across $5$ downsampling operations. For $f_{\phi}$, we use anatomix\cite{dey2024learning}, which produces per-voxel embeddings and has $5.9$M parameters. As anatomix was trained entirely on synthetic data, it eliminates data leakage risks in our experiments. \textbf{$f_{\phi}$ is frozen during training and is not finetuned} and we empirically verify its stability to domain shifts in Appendix~\ref{app:representation_analysis}. 
We train with a Dice and cross-entropy loss with all other training details in Appendix~\ref{app:implementation}.

\begin{figure}[!t]
  \centering
  \includegraphics[width=0.925\textwidth]{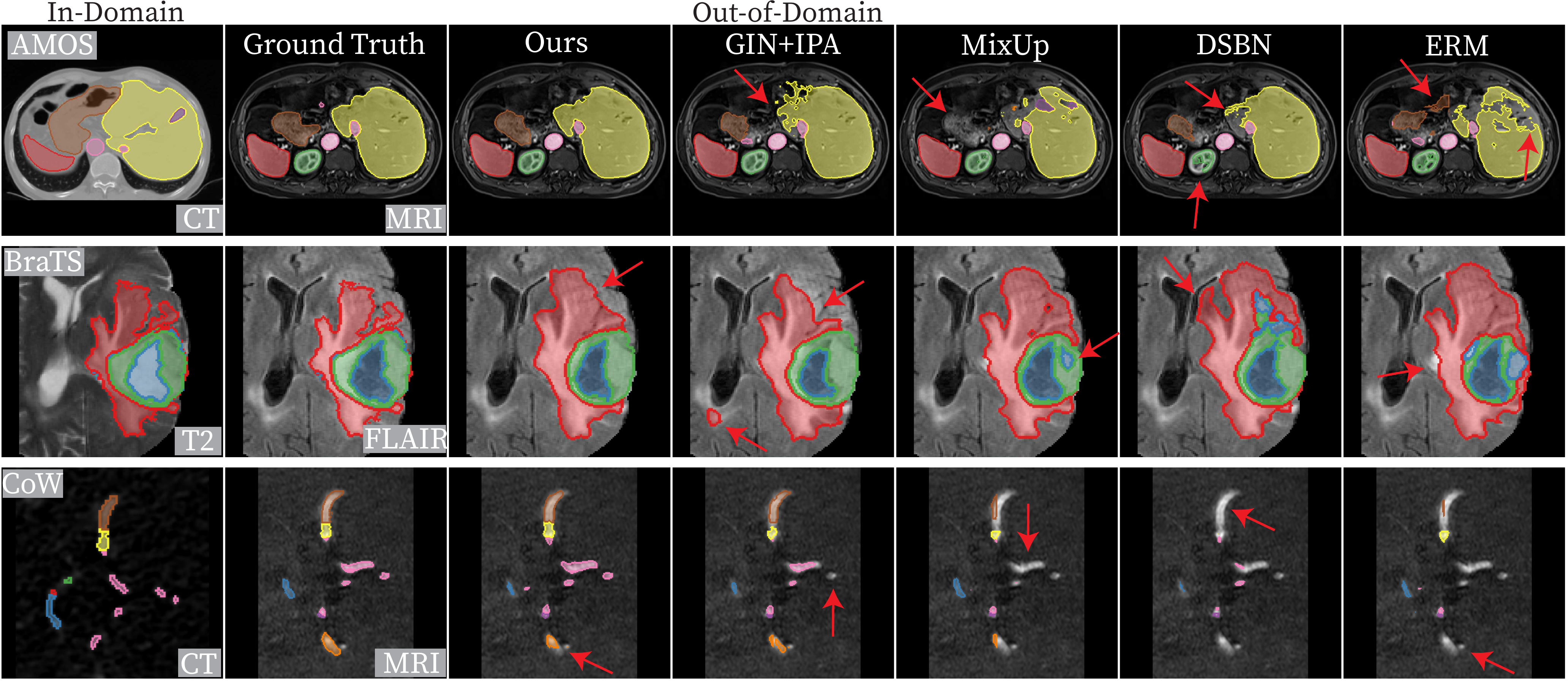}
  \caption{\textbf{Qualitative all-data segmentation results.} Rows correspond to different datasets and domain shift types. Cols. 1 and 2 visualize in-domain training and out-of-domain testing samples, respectively. Cols. 3--7 illustrate qualitative results on out-of-domain test set examples. %
  } 
  \label{fig:qualitative_250ks}
\end{figure}

\begin{table}[!t]
\caption{\textbf{All-data domain generalization results.} Mean Dice score (w/ std. error) across datasets against domain generalization baselines, with avg. ranking on the far-right.}
\centering
\footnotesize
\setlength{\tabcolsep}{2.5pt}
\begin{tabular}{l c c c c c c c}
\toprule
\textbf{Method} &
\begin{tabular}{c} \textbf{AMOS}\cite{ji2022amos}\\ \footnotesize Modality \end{tabular} &
\begin{tabular}{c} \textbf{BraTS}\cite{antonelli2022medical}\\ \footnotesize Sequence \end{tabular} &
\begin{tabular}{c} \textbf{CoW}\cite{topcowchallenge}\\ \footnotesize Modality \end{tabular} &
\begin{tabular}{c} \textbf{HVSMR}\cite{pace2024hvsmr}\\ \footnotesize Disease \end{tabular} &
\begin{tabular}{c} \textbf{PanDG}\cite{zhang2025rethink}\\ \footnotesize Phase \end{tabular} &
\begin{tabular}{c} \textbf{Prostate}\cite{liu2021feddg}\\ \footnotesize Site \end{tabular} &
\textbf{\begin{tabular}{c} Avg.\\ Rank \end{tabular}} \\
\midrule
ERM & 57.4 (1.7) & 38.9 (1.8) & 24.6 (1.5) & 63.6 (3.0) & 25.3 (2.5) & 83.0 (0.7) & 8.0 \\
GIN\cite{ouyang2022causality} & 65.1 (0.7) & 29.9 (1.6) & 17.9 (1.5) & 64.1 (3.0) & 41.1 (1.8) & 82.1 (0.8) & 7.2 \\
GIN+IPA\cite{ouyang2022causality} & $\underline{66.5 (0.8)}$ & $\underline{44.7 (1.6)}$ & $\underline{37.6 (1.9)}$ & 64.7 (2.6) & $\textbf{50.3 (1.4)}$ & $\underline{84.1 (0.6)}$ & $\underline{2.5}$ \\ 
MixUp\cite{zhang2017mixup} & 61.5 (0.9) & 41.2 (1.7) & 36.3 (1.5) & 64.5 (3.2) & 26.7 (2.5) & 82.5 (0.7) & 5.5 \\
CutOut\cite{devries2017improved} & 55.5 (1.7) & 41.8 (1.6) & 26.5 (1.9) & 64.9 (3.5) & 30.5 (2.4) & 84.0 (0.6) & 5.0 \\
DSBN\cite{zhou2022generalizable} & 58.0 (1.4) & 40.5 (1.7) & 29.6 (1.9) & \underline{$66.1 (3.2)$} & 26.3 (2.6) & 83.4 (0.6) & 5.2 \\
MixStyle\cite{zhou2021domain} & 53.9 (1.7) & 40.7 (1.6) & 27.9 (1.4) & 64.4 (2.6) & 24.0 (2.5) & 82.3 (0.7) & 7.7 \\
RSC\cite{huang2020self}      & 60.8 (1.3) & 42.1 (1.6) & 26.0 (1.9) & $\textbf{66.2 (2.9)}$ & 28.2 (2.4) & 83.4 (0.6) & 4.3 \\
nnUNet\cite{isensee2021nnu}   & 48.0 (2.7) & 35.5 (1.8) & 13.7 (2.0) & 65.0 (2.5) & 30.1 (2.7) & 71.6 (3.6) & 8.0  \\
\midrule
Ours & $\textbf{67.5 (0.9)}$ & $\textbf{46.2 (1.6)}$ & $\textbf{47.8 (1.3)}$ & 65.8 (3.4) & $\underline{46.0  (1.9)}$ & $\textbf{84.7 (0.5)}$ & $\textbf{1.5}$ \\
\bottomrule
\end{tabular}
\label{tab:main_results_full}
\end{table}

\section{Experiments}
\label{sec:experiments}
\subsection{Experimental Setup}
\noindent\textbf{Datasets and Domain Shifts.} We use 6 datasets spanning 5 clinical domain shifts: sequence, disease stage, clinical site, imaging phase, and modality.  \underline{Sequence}: we train on T1, T2, and T1-CE MRIs from MSD-BraTS~\cite{antonelli2022medical} and evaluate on FLAIR from unseen subjects. \underline{Disease-stage}: we train on mild and moderate cardiac disease scans from HVSMR~\cite{pace2024hvsmr} and evaluate on severe cases.  \underline{Site}: we use a multi-site prostate MRI collection~\cite{Bloch2015, lemaitre2015computer, litjens2014evaluation, liu2021feddg}, where splits come from distinct imaging sites. \underline{Phase}: we use PanDG~\cite{zhang2025rethink} to train and evaluate on venous- and out-of-phase abdominal MRIs, respectively, all from distinct sites. \underline{Modality}: we evaluate generalization from CT to MRI on AMOS~\cite{ji2022amos} and from CTA to MRA on TopCoW~\cite{topcowchallenge} for abdominal organs and cerebral vessels, respectively. All other dataset-specific details are in Appendix~\ref{app:dataset_preprocessing}.

\begin{table}[!t]
\caption{
\textbf{Few-shot domain generalization results.}
\textbf{Left}: Mean Dice score (w/ std. error) for few-shot experiments. 
\textbf{Right}: Arbitrarily selected few-shot results.%
}
\centering
\begin{minipage}[t]{0.52\textwidth}
\vspace{0pt}
\centering
\footnotesize
\raggedright
\setlength{\tabcolsep}{3pt} %
\begin{tabular}{l c c c c}
\toprule
\textbf{Method} &
\begin{tabular}{c} \textbf{AMOS}\\ {\scriptsize Modality} \end{tabular} &
\begin{tabular}{c} \textbf{CoW}\\ {\scriptsize Modality} \end{tabular} &
\begin{tabular}{c} \textbf{HVSMR}\\ {\scriptsize Disease} \end{tabular} &
\begin{tabular}{c} \textbf{Prostate}\\ {\scriptsize Site} \end{tabular} \\
\midrule
ERM & 5.4(1.0) & 18.2(1.3) & 40.4(3.8) & 27.8(5.0) \\
GIN & 6.0(0.7) & 13.4(1.1) & \underline{$44.9 (4.0)$} & 20.3(4.3) \\
GINIPA  & $\underline{37.8(1.5)}$ & $23.8$(1.7) & $44.5$(3.8) & $27.9$(5.1) \\
MixUp    & $15.6$(1.6) & $\underline{33.7(1.2)}$ & $43.7$(3.9) & $\underline{53.2(3.4)}$ \\
CutOut & 8.4(1.1) & 18.2 (1.4) & 41.3(3.6) & 40.8(4.5) \\
DSBN     & $19.3$(1.8) & $22.2$(1.5) & $44.2$(3.7) & $44.1$(5.2) \\
MixStyle & 3.8(0.8) & 17.6(1.1) & 37.7(3.5) & 25.0(5.1)\\
RSC & 3.6(1.0) & 18.3(1.3) & 44.1(3.9) & 28.5(5.2) \\
\midrule
Ours     & $\textbf{53.4 (1.4)}$ & $\textbf{40.3 (1.5)}$ & $\textbf{46.3 (3.8)}$ & $\textbf{64.2 (4.4)}$ \\
\bottomrule
\end{tabular}
\end{minipage}%
\hfill%
\begin{minipage}[t]{0.46\linewidth}
\vspace{-4pt}
\raggedright
\includegraphics[width=\linewidth]{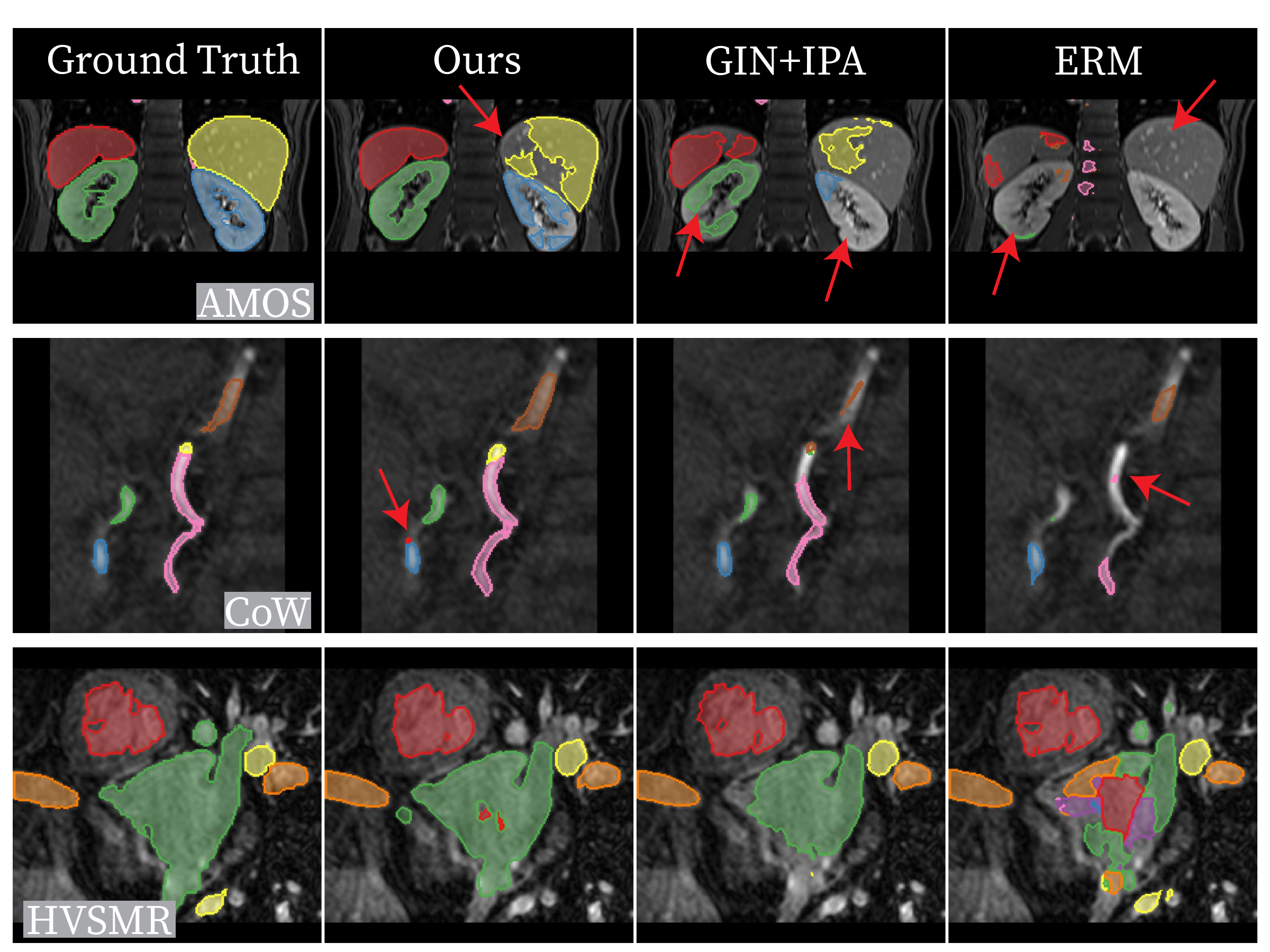}
\end{minipage}
\label{tab:main_results_fewshot}
\end{table}

\subpara{Baselines.}
To reflect practical clinical deployment, we compare against state-of-the-art DG methods that neither require domain labels nor adapt on unlabeled test data. Unless otherwise noted, all methods use the same UNet architecture for their segmentation branch. For fair comparison, we optimize each baseline's primary hyperparameter using a grid search, detailed in Appendix~\ref{app:additional_results}. We apply a strong and extensive augmentation pipeline across all methods to maximize their out-of-distribution generalization potential. 
Our ERM baseline reflects standard practice and only trains on the unstable input using an architecture and training routine matched to ours. We also evaluate multiple data-level (GIN, GIN+IPA~\cite{ouyang2022causality}; MixStyle~\cite{zhou2021domain}; MixUp~\cite{zhang2017mixup}; CutOut~\cite{zhou2021domain}) and model-level (DSBN~\cite{zhou2022generalizable}; RSC\cite{huang2020self}) domain generalization methods, detailed in Appendix~\ref{app:baseline_implementation}. We also compare against the widely used nnUNet~\cite{isensee2021nnu}, using its default architecture and augmentation pipeline.

\subpara{Data Availability Settings.} We evaluate performance in two common medical imaging scenarios: datasets with abundant annotations ("\textbf{all-data}") and those with limited annotations ("\textbf{few-shot}"). The all-data models are trained for 250K iterations following~\cite{isensee2021nnu}, while few-shot models are trained for 40K steps as in~\cite{dey2024learning}. The few-shot experiments use five randomly selected subjects from each dataset for training and exclude the BraTS and PanDG datasets due to high inter-sample heterogeneity (\eg, in tumor presentations) that precludes meaningful training and interpretation of few-shot experiments. %

\begin{figure}[t!]
    \centering
    \includegraphics[width=\linewidth]{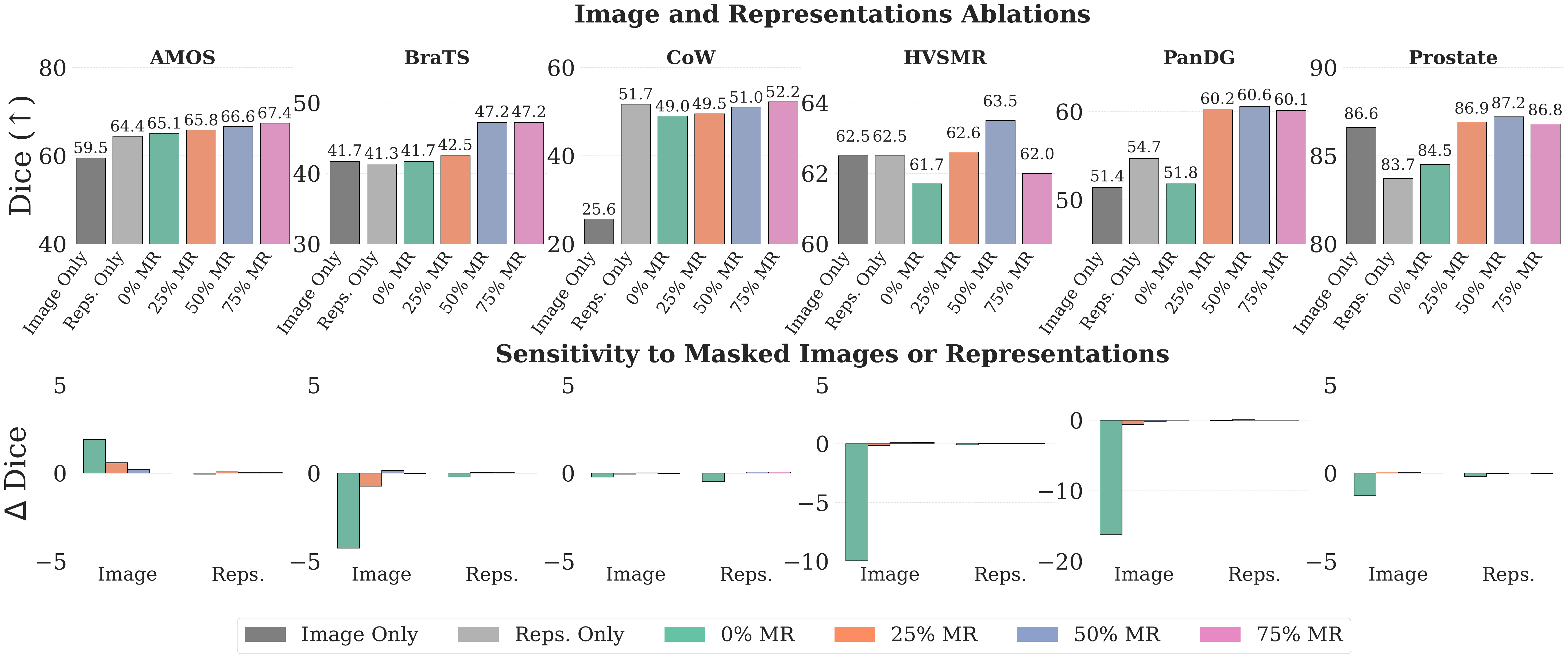}
    \caption{\textbf{Ablating Regularization via View Masking}. 
    \textbf{Top:} We train models with only the image (``Image Only"), only the representations (``Reps. Only"), and both combined with increasing view masking rates (MR=$0\%,25\%$,$50\%$,$75\%$). \methodname  increases validation performance over training on the image alone, the representations alone, and over combination without masking (MR=$0\%$). 
    \textbf{Bottom:} We perform a one-channel removal analysis to measure the sensitivity in validation ($\mathrm{\Delta Dice}$) w.r.t. the full input (image and representations). The drop in performance indicates that, without view masking, the model does not balance information sources and heavily relies on the image input. %
    }
    \label{fig:dropout_sensitivity}
    \vspace{-1em}
\end{figure}

\subsection{Results} \label{subsec:main_results}

\looseness=-1
\subpara{All-data Experiments.} Table~\ref{tab:main_results_full} summarizes the results across all six datasets. Given the diversity of evaluated domain shifts, we also report the average ranking (far right, 1 = best). \methodname achieves an average ranking of 1.5, demonstrating the reliability and consistency of our approach. \methodname performs particularly well under strong appearance differences between source and target domains, as in AMOS, BraTS, and CoW. While the overall runner-up GIN+IPA (average rank of 2.5) performed better on PanDG, our method was second on that dataset and substantially outperformed the remaining methods. For the disease stage shift in HVSMR, most methods converge to within a single Dice point of one another. Notably, our augmentation pipeline lifts performance across all methods as our ERM baseline outperforms the popular nnUNet (which is recommended to be used as is) on four of six datasets, underscoring the outsized role augmentation plays in domain generalization. %

\subpara{Few-shot Experiments.} Table~\ref{tab:main_results_fewshot} and its figure report few-shot segmentation results. \methodname achieved the best performance across all datasets, observing particularly large performance gains over the second-best methods of 15.6, 6.6, 11.0 Dice points on the AMOS, CoW, and Prostate datasets, respectively. Notably, the next-best method changes between datasets, whereas \methodname maintains consistently high performance across domain shift types and anatomical regions, indicating that using regularized combinations of images and representations yields significant data efficiency. Further, while GIN+IPA ranked second in the all-data experiments, it underperformed in the few-shot setting.

\subsection{Analysis \& Ablations}
\label{subsec:analysis_and_ablations}
\subpara{Training on Only Stable Representations.} We train networks on the stable inputs alone without the original input to investigate the benefits of incorporating both stable and unstable features. \textbf{Fig~\ref{fig:dropout_sensitivity}} (Top) shows that training on stable representations alone (``Reps. Only'') has inconsistent performance across datasets, indicating that there is a need to also utilize unstable, in-domain input features.

\looseness=-1
\subpara{Ablating Regularization via View Masking.} We now investigate the regularization effects of masking the combined features. 
As shown in \textbf{Fig.~\ref{fig:dropout_sensitivity}} (Top), concatenating unstable and stable inputs without masking (0\% Mask Rate (MR)) improves validation performance vs. image-only on four of six datasets but causes stagnation or decline in others.
In contrast, applying moderate masking (25--75\% MR) on the combination leads to improved performance, agreeing with our theoretical analyses. Further, \textbf{Fig.~\ref{fig:dropout_sensitivity}} (Bottom) assesses dependence on the original image in \methodname models on validation data by masking either the image or the representation channels during inference. Models trained without masking (0\% MR) exhibit a strong dependence on the image channel, as masking it causes large performance drops on several datasets, whereas masking representations has a relatively smaller impact. With regularization via view masking, dependence on the original image is alleviated, indicating that the model learns to use both image and representation features more effectively.

\begin{wraptable}{r}{0.5\textwidth} %
\vspace{-12pt}
\caption{\textbf{Varying the representation extractor.} 
Validation Dice scores for mask rates of 0.0, 0.25, 0.50, and 0.75, as well as training on  reps. alone.}
\label{tab:cd_ablation_all}
\centering
\scriptsize
\setlength{\tabcolsep}{3pt}
\begin{tabular}{@{}llccccc@{}}
\toprule
Dataset & Model & 0.0 & 0.25 & 0.50 & 0.75 & RepOnly \\
\midrule
\multirow{2}{*}{AMOS\cite{ji2022amos}} 
  & nnInteract.~\cite{isensee2025nninteractive} & 64.3 & 65.0 & 65.9 & 66.4 & 63.9 \\
  & anatomix~\cite{dey2024learning} & 65.1 & 65.8 & 66.6 & \textbf{67.4} & 64.4 \\
\addlinespace[0.5em]
\multirow{2}{*}{BraTS\cite{antonelli2022medical}} 
  & nnInteract.~\cite{isensee2025nninteractive} & 41.5 & 43.4 & 44.2 & 42.9 & 42.7 \\
  & anatomix~\cite{dey2024learning}          & 41.7 & 42.5 & 47.2 & \textbf{47.2} & 41.3 \\
\addlinespace[0.5em]
\multirow{2}{*}{CoW\cite{topcowchallenge}} 
  & nnInteract.~\cite{isensee2025nninteractive} & 48.7 & 45.9 & 48.9 & 50.8 & 50.7 \\
  & anatomix~\cite{dey2024learning}          & 49.0 & 49.5 & 51.0 & \textbf{52.2} & 51.7 \\
\addlinespace[0.5em]
\multirow{2}{*}{HVSMR\cite{pace2024hvsmr}} 
  & nnInteract.~\cite{isensee2025nninteractive} & 61.1 & 61.5 & 63.1 & \textbf{63.9} & 61.8 \\
  & anatomix~\cite{dey2024learning}          & 61.7 & 62.6 & 63.5 & 62.0 & 62.5 \\
\addlinespace[0.5em]
\multirow{2}{*}{PanDG\cite{zhang2025rethink}} 
  & nnInteract.~\cite{isensee2025nninteractive} & 49.0 & 49.7 & 54.6 & 59.0 & 51.2 \\
  & anatomix~\cite{dey2024learning}          & 51.8 & 60.2 & \textbf{60.6} & 60.1 & 54.7 \\
\addlinespace[0.5em]
\multirow{2}{*}{Prostate\cite{liu2021feddg}} 
  & nnInteract.~\cite{isensee2025nninteractive} & 87.6 & 87.7 & \textbf{88.1} & 87.9 & 87.8 \\
  & anatomix~\cite{dey2024learning}          & 84.5 & 86.9 & 87.2 & 86.8 & 83.7 \\
\bottomrule
\end{tabular}
\vspace{-10pt}
\end{wraptable}

\subpara{Changing the Representation Extractor.} We use anatomix~\cite{dey2024learning} as $f_{\phi}$ because of its domain-invariance. However, \methodname is generic and can use other 3D biomedical foundation models as well. We now use nnInteractive~\cite{isensee2025nninteractive} with \methodname as it is the only other pan-application 3D biomedical foundation model.
We extract representations from its penultimate layer and assume no prompts as inputs. %
In \textbf{Table~\ref{tab:cd_ablation_all}}, anatomix achieved higher validation performance in four of six datasets and hence was chosen as  $f_{\phi}$ for the primary experiments. We also find that previous anatomix trends continue to hold for nnInteractive: simple concatenation of images and representations (MR=0\%) does not help, but regularization with MR$>$0\% substantially improves validation performance.

\begin{table}[!t]
\caption{\textbf{Comparisons with methods that perform adaptation.} Mean Dice score for test-time adaptation (TTA) and joint-predictor settings. TTA methods observe the test distribution at inference, whereas \methodname uses no adaptation and shares the same model weights across all evaluations. 
}
\centering
\resizebox{\columnwidth}{!}{%
\setlength{\tabcolsep}{2.5pt}
\begin{tabular}{l c c c c c c c c}
\toprule
\textbf{Method} &
\begin{tabular}{c} \textbf{Adaptation} \\ \footnotesize \textbf{Needed?} \end{tabular} &
\begin{tabular}{c} \textbf{AMOS}\\ \footnotesize Modality \end{tabular} &
\begin{tabular}{c} \textbf{BraTS}\\ \footnotesize Sequence \end{tabular} &
\begin{tabular}{c} \textbf{CoW}\\ \footnotesize Modality \end{tabular} &
\begin{tabular}{c} \textbf{HVSMR}\\ \footnotesize Disease \end{tabular} &
\begin{tabular}{c} \textbf{PanDG}\\ \footnotesize Phase \end{tabular} &
\begin{tabular}{c} \textbf{Prostate}\\ \footnotesize Site \end{tabular} &
\textbf{\begin{tabular}{c} Avg.\\ Rank \end{tabular}} \\
\midrule
EntMin (ERM)\cite{wang2020tent} & Yes & 64.7 (1.2) & 38.9 (1.8) & 24.4 (1.5) & 64.4 (2.9) & 27.7 (2.4) & 84.5  (0.5) & 4.2 \\
EntMin (MR=0\%)\cite{wang2020tent} & Yes & $\underline{65.8 (1.2)}$ & $\underline{39.3 (1.6)}$ & 45.1 (1.2) & 65.0 (2.8) & $\underline{35.7 (2.0)}$ & 83.6 (0.5) & $\underline{3.0}$ \\
SFB\cite{eastwood2023spuriosity} & Yes & 63.6 (1.0) & 38.8 (1.6) & 41.0 (1.3) & 65.2 (2.6) & 30.4 (2.5) & $\textbf{85.4 (0.5)}$ & 3.3 \\
Alpha Tuning~\cite{anti-causal} & Yes & 64.1 (1.3)& 41.7 (1.8) & $\textbf{49.0} (1.4)$ & 64.8 (3.3) & 28.5 (2.5) & 84.3 (0.5) & 3.2 \\
\midrule
Ours  & No & $\textbf{67.5 (0.9)}$  & $\textbf{46.2 (1.6)}$ & $\underline{47.8(1.3)}$ & $\textbf{65.8(3.4)}$ & $\textbf{46.0 (1.9)}$ & $\underline{84.7(0.5)}$ & $\textbf{1.3}$ \\
\bottomrule
\end{tabular}}
\label{tab:tta_table}
\centering
\caption{\textbf{Source Masking vs. Generic Masking.} Comparison of masking to regularize stable and unstable source combination (Ours) vs. generic masking (channel dropout) on all predictor layers.}%
\footnotesize
\begin{tabular}{lccccccc}
\toprule
\textbf{Method} & \textbf{AMOS}\cite{ji2022amos} & \textbf{BraTS}\cite{menze2014multimodal, antonelli2022medical} & \textbf{CoW}\cite{topcowchallenge} & \textbf{HVSMR}\cite{pace2024hvsmr} & \textbf{PanDG}\cite{zhang2025rethink} & \textbf{Prostate}\cite{liu2021feddg}\\
\midrule
Channel Dropout & 67.0 (1.0) & 39.8 (1.5) & 33.0 (1.1) & 65.3 (3.4) & 35.3 (2.5) & 83.6 (0.6) \\
Ours & $\textbf{67.5 (0.9)}$ & $\textbf{46.2 (1.6)}$ & $\textbf{47.8 (1.3)}$ & $\textbf{65.8 (3.4)}$ & $\textbf{46.0 (1.9)}$ & $\textbf{84.7 (0.5)}$\\
\bottomrule
\end{tabular}
\label{tab:layer_dropout}
\vspace{-1em}
\end{table}

\subpara{Test-Time Adaptation (TTA).} 
Our paper considers domain generalization where no adaptation or retraining is allowed on the test domain. To stress-test our method, we also compare it against methods that \textit{do} adapt to unlabeled test data. These include entropy minimization~\cite{wang2020tent} applied to both ERM and the naïve model that uses stable and unstable representations without regularization (MR=0\%). We also evaluate two recent TTA methods that use predictors trained separately on stable and unstable features: Stable Feature Boosting (SFB)~\cite{eastwood2023spuriosity} and Alpha Tuning~\cite{anti-causal}. Notably, all of these methods need training or calibration on the test domain, whereas \methodname \textbf{does not need adaptation}. Despite never observing the test domain, \methodname outperforms both standard TTA and feature combination baselines on 4 out of 6 datasets and achieves the best average rank in \textbf{Table~\ref{tab:tta_table}}. %

\looseness=-1
\subpara{Targeted Input Masking vs. Generic Masking.} 
\methodname uses view masking to regularize the combination of stable and unstable sources, which is a fundamentally different use of masking from generic masking-based regularization. To disentangle the benefits of masking for source combination vs. generic masking, we now consider the case where channel masking applied to all intermediate convolutional layers of $h_{\theta}$ in a standard ERM setting with only image inputs, which reduces to channel dropout~\cite{tompson2015efficient}. Across all datasets in \textbf{Table~\ref{tab:layer_dropout}}, our proposed targeted source-combination regularization via view masking yields better out-of-domain performance than generic channel dropout.

\begin{wraptable}{r}{0.55\textwidth} %
\vspace{-14pt}
\centering
\scriptsize
\caption{\textbf{Effect of Mask Rate.} Validation Dice score as a function of mask rate used in training. Left: all-data results; right: few-shot results.}
\vspace{-0.5em}
\label{tab:dropout_ablation}
\setlength{\tabcolsep}{3pt} %
\begin{tabular}{l c c c c c c c c}
\toprule
\textbf{Dataset} 
& \multicolumn{4}{c}{\textbf{All-data Setting}} 
& \multicolumn{4}{c}{\textbf{Few-shot Setting}} \\
\cmidrule(lr){2-5} \cmidrule(lr){6-9}
 & $0.0$ & $0.25$ & $0.50$ & $0.75$ 
 & $0.0$ & $0.25$ & $0.50$ & $0.75$ \\
\midrule
AMOS\cite{ji2022amos} & 65.1 & 65.8 & 66.6 & \textbf{67.4} & 23.5 & 41.9 & 50.8 & \textbf{53.6} \\
BraTS\cite{antonelli2022medical} & 41.7 & 42.5 & 47.2 & \textbf{47.2} & -- & -- & -- & -- \\
CoW\cite{topcowchallenge} & 49.0 & 49.5 & 51.0 & \textbf{52.2} & 39.0 & \textbf{43.5} & 37.7 & 43.0 \\
HVSMR\cite{pace2024hvsmr} & 61.7 & 62.6 & \textbf{63.5} & 62.0 & 43.4 & 44.4 & \textbf{45.3} & 44.5 \\
PanDG\cite{zhang2025rethink} & 51.8 & 60.2 & \textbf{60.6} & 60.1 & -- & -- & -- & -- \\
Prostate\cite{liu2021feddg} & 84.5 & 86.9 & \textbf{87.2} & 86.8 & 68.4 & 76.8 & \textbf{82.5} & 82.1 \\
\bottomrule
\end{tabular}
\vspace{-13pt}
\end{wraptable}

\looseness=-1
\subpara{Varying Mask Rate.} %
We now evaluate the effect of the mask rate $r \in [0.0, 0.25, 0.50, 0.75]$ across all datasets. \textbf{Table~\ref{tab:dropout_ablation}} reports the average validation Dice scores for each setting. Holistically, a higher $r$ tends to improve validation performance, highlighting the benefits of weight sharing for out-of-distribution generalization. We perform model selection ($r$) on the validation set using this analysis to compute our test set results in Sec.~\ref{subsec:main_results}. However, we note that not tuning \methodname and simply using $r$=0.5 across all datasets still outperforms baselines that \textit{are} tuned on validation data (Appendix~\ref{app:r0.5}), highlighting its plug-and-play nature and ease-of-use across biomedical and clinical settings.

\subpara{Additional Analyses.} Further empirical verification of Condition~\ref{cond:nai}, \methodname's robustness to simulated domain shifts, finetuning $f_{\phi}$ directly, and $f_{\phi}$'s  domain stability is presented in Appendix~\ref{app:additional_results}.

\section{Discussion}
\label{sec:discussion}

\looseness=-1
\subpara{Limitations.} \methodname adds a forward pass through a frozen foundation model as computational overhead. For example, on BraTS, this results in an average training time per iteration increase of 26\% relative to ERM. However, we note that existing domain generalization methods typically incur much higher computational costs. For example, the runner-up in our experiments (GIN+IPA~\cite{ouyang2022causality}) incurs a much larger 450\% training time increase. Further, \methodname's overhead is marginal at inference, as the extra forward pass adds only 8\% runtime relative to ERM. Runtimes are analyzed in Appendix~\ref{app:runtime}. %

\looseness=-1
\subpara{Conclusions.}
This paper presented \methodname, a simple and general framework for domain-generalized biomedical image segmentation that combines raw image intensities with stable representations from foundation models. By enforcing robust usage of both information sources through a lightweight feature combination regularizer, \methodname achieved strong and consistent gains across diverse domain shifts, modalities, anatomies, and data-availability settings, all without requiring complex architectural changes or specialized loss functions used by existing methods. \methodname's simplicity follows from, and is supported by, our theoretical analyses showing that training on partial views of stable and unstable sources yields a principled solution. Our experiments demonstrated that \methodname outperformed state-of-the-art domain generalization methods in both standard training and few-shot regimes, enabling practitioners to no longer need complex data- or model-level interventions to train and use robust segmentation models in their applications. %

\section{Acknowledgments}
Sebo Diaz is supported by the National Science Foundation Graduate Research Fellowship Program (NSF GRFP) under Grant No. DGE-2146755, the MathWorks MATLAB Graduate Fellowship, and the National Institutes of Health (NIH) under Grant No. GENFD0002152100. Polina Golland is partially supported by MIT CSAIL-Wistron Program and MIT Health and Life Sciences
Collaborative (HEALS). Elfar Adalsteinsson is partially supported by NIH R01 EB032708 and NIH R01 EB036945.
Neel Dey is partially supported by NIH NIBIB R01EB033773, NIMH UM1 MH134812-01, and an MGB Radiology Innovation Award.

\bibliographystyle{plain}
\bibliography{main}

\newpage
\appendix
\section{Proofs}
\label{app:proofs}

We first restate the propositions for completeness and then proceed with the proofs.

\subpara{Proposition 1 (Stationarity forces use of stable inputs).}
\textit{Given Assumptions~\ref{ass:stable}--\ref{ass:informative}, let $h_{\theta}$ be a model whose first layer computes: $a^{(1)} = \sigma( W_{u} \star X_{u} + W_{s} \star X_{s} )$
where $W_u, W_s$ denote the first-layer kernel slices corresponding to the unstable and stable input channels, respectively, and are differentiable w.r.t. $\theta$. Suppose the function class $\{h_{\theta}(\mathbf{0}, \cdot) : \theta \in \Theta \}$ can realize the Bayes-optimal predictor for $(X_s, Y)$ and that $\mathcal{R}_{(0,1)}$ has no spurious stationary points w.r.t. $W_s$. Given Condition~\ref{cond:nai}, if $\pi_{(0,1)} > 0$, then at no stationary point $\theta^*$ of $\mathcal{L}$ can $h_{\theta^*}$ ignore $X_{s}$.%
}

\begin{proof}
As a reminder, our input is $X = (X_u, X_s)$ and masks are $(0,1), (1,0), $ or $(1,1)$. Additionally, for clarity, we say $h_{\theta}$ \emph{ignores} $X_s$ if its output under the mask $(0,1)$ is invariant w.r.t. $X_s$, \ie, if $h_{\theta}$ ignores $X_s$, its output is constant w.r.t. changes in $X_s$.

We proceed to construct a proof by contradiction. Let $\theta^{*}$ be a stationary point of $\mathcal{L}$ and let $h_{\theta^*}$ ignore $X_{s}$.

\subpara{Step 1: Input partitioning.} Because the input to $h_{\theta}$ has explicit partitions $(X_u, X_s)$, the first-layer admits a weight slicing into $W_{u}$ and $W_{s}$ by input channel group, while all subsequent layers share weights across the activations produced by both groups, \ie, the input is the only part of the model that is explicitly separated. Stationarity of $\mathcal{L}$ at $\theta^{*}$ requires that the gradient vanishes when restricted to $W_{s}$:
\begin{equation}
\begin{aligned}
\nabla_{W_s}\mathcal{L}(\theta^*)
&= \quad \pi_{1,0}\,\nabla_{W_s}\mathcal{R}_{(1,0)}(\theta^*) \\
&\quad + \pi_{1,1}\,\nabla_{W_s}\mathcal{R}_{(1,1)}(\theta^*) \\
&\quad + \pi_{0,1}\,\nabla_{W_s}\mathcal{R}_{(0,1)}(\theta^*) \\
&= \mathbf{0}.
\end{aligned}
\label{eq:stationary-Ws}
\end{equation}

\subpara{Step 2: The unstable input regime is independent of $W_{s}$.} Under mask $(1,0)$ the stable channels are set to zero, so the first layer computation becomes $\sigma(W_{u} \star X_{u})$, which is independent of $W_{s}$. Because all subsequent layers are functions of this activation alone, the entire forward pass, and hence the loss, is independent of $W_{s}$: $\nabla_{W_{s}}\mathcal{R}_{(1,0)}(\theta^*)=0.$
This holds exactly by the bi-linearity of cross-correlation: $W_{s} \star \mathbf{0} = \mathbf{0}$ and $\partial(W_{s} \star \mathbf{0}) = \mathbf{0}.$ Substituting this into Eq.~\ref{eq:stationary-Ws} eliminates the first term.

\subpara{Step 3: Stable-only risk is suboptimal.} Under mask $(0,1)$ the input is $(\mathbf{0},X_s)$. Because $h_{\theta^*}$ ignores $X_s$, the output is constant in $X_s$. For any strictly proper scoring rule, the minimal achievable risk by a constant predictor is $H(Y)$, so $\mathcal{R}_{(0,1)}(\theta^*) \geq H(Y)$.

By Assumption~\ref{ass:informative}, $I(Y; X_s) > 0$, hence $H(Y \mid X_s) < H(Y)$. Therefore $\mathcal{R}_{(0,1)}(\theta^*) \geq H(Y) > H(Y \mid X_s)$, which violates the hypothesis that $\mathcal{R}_{(0,1)}$ has no suboptimal stationary points. It follows that $\theta^*$ is not stationary for $\mathcal{R}_{(0,1)}$ w.r.t.\ $W_s$, and since the risk is differentiable, there exists a direction $v_s$ such that
\[
\langle \nabla_{W_s} \mathcal{R}_{(0,1)}(\theta^*),\, v_s \rangle < 0.
\]

\subpara{Step 4: Contradiction via non-adversarial interaction.}
Set the gradient direction $v_{s} := -\nabla_{W_{s}} \mathcal{R}_{(0,1)}(\theta^*)$, as a descent direction for the stable-only risk. Because the deeper layers are shared, activating $W_{s}$ also affects predictions under $(1,1)$, so $\nabla_{W_{s}}\mathcal{R}_{(1,1)}(\theta^*)$ does not vanish. However, by Condition~\ref{cond:nai}, the gradient does not oppose the stable-only descent:
\begin{align}
    \bigl\langle \nabla_{W_s}\mathcal{L}(\theta^*),\, v_s
    \bigr\rangle
    &= \pi_{(0,1)}\,
       \underbrace{
         \bigl\langle \nabla_{W_s}\mathcal{R}_{(0,1)}(\theta^*),\,
         v_s \bigr\rangle
       }_{
         \;-\,\bigl\|\nabla_{W_s}\mathcal{R}_{(0,1)}\bigr\|^2
         \;<\; 0
       }
    \;+\;
    \pi_{(1,1)}\,
       \underbrace{
         \bigl\langle \nabla_{W_s}\mathcal{R}_{(1,1)}(\theta^*),\,
         v_s \bigr\rangle
       }_{
         \leq\; 0
         \;\;\text{(Condition~\ref{cond:nai})}
       } \nonumber \\[4pt]
    &< 0,
    \label{eq:contradiction}
\end{align}
which contradicts $\nabla_{W_{s}}\mathcal{L}(\theta^*) = \mathbf{0}$. This concludes the proof.
\end{proof}

\subpara{Proposition 2 (Stable-only performance ceiling).}
\textit{Under Assumptions~\ref{ass:stable}--\ref{ass:informative}, $H(Y | X_s)$ is environment-invariant, and the Bayes-optimal stable-only predictor $h_s^*(x_s) := \mathbb{E}[Y | X_s = x_s]$ achieves this value as its risk in every environment. Therefore, in any environment $e$, the Bayes-optimal joint risk $H_e(Y | X_s, X_u) < H(Y | X_s)$, with strict inequality, where $H_e(\cdot)$ denotes entropy under $\mathbb{P}_e$.} %

\begin{proof}
We provide the following proof in three steps.

\subpara{Step 1: Environment-invariance of $H(Y \mid X_s)$.}
By Assumption~\ref{ass:stable}, $Y \perp E \mid X_s$, so
for any environment $e$,
$\mathbb{P}_e(Y \mid X_s) = \mathbb{P}(Y \mid X_s)$.
It follows that
$H_e(Y \mid X_s) = H(Y \mid X_s)$ for all
$e \in \mathcal{E}$. The Bayes-optimal stable-only
predictor $h_s^*(x_s) := \mathbb{E}[Y \mid X_s = x_s]$
therefore achieves risk $H(Y \mid X_s)$ in every
environment, including unseen ones.

\subpara{Step 2: Joint risk is not larger than the stable-only risk.}
For any environment $e$, the conditional entropy satisfies
\[
    H_e(Y \mid X_s, X_u)
    = H_e(Y \mid X_s) - I_e(Y;\, X_u \mid X_s)
    \leq H_e(Y \mid X_s)
    = H(Y \mid X_s),
\]
where the inequality follows because mutual information
is non-negative, and the final equality from Step 1.

\subpara{Step 3: Strict inequality between joint and stable-only risk.}
By Assumption~\ref{ass:complementarity}, 
$Y \not\perp_e X_u \mid X_s$ for every $e$, 
so $I_e(Y;\, X_u \mid X_s) > 0$ and hence
\[
    \Delta_e 
    := H(Y \mid X_s) - H_e(Y \mid X_s, X_u) 
    \geq I_e(Y;\, X_u \mid X_s) > 0.
\]
Therefore, $H_e(Y \mid X_s, X_u) < H(Y \mid X_s)$, concluding the proof.
\end{proof}

\clearpage
\newpage

\section{Additional Results}
\label{app:additional_results}

\begin{figure}[!th]  %
    \centering
    \includegraphics[width=0.95\textwidth]{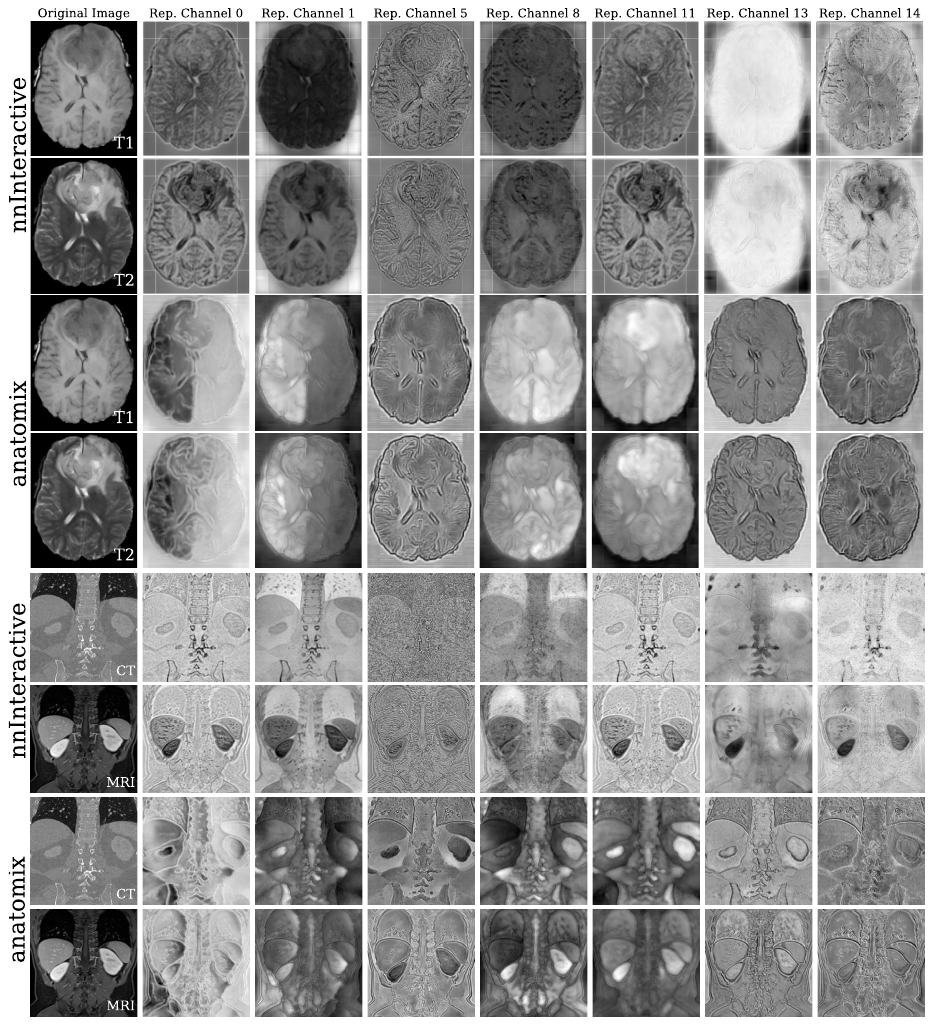}
\caption{
\textbf{Comparing cross-modality representations extracted by foundation models.} \textit{Rows 1--4:} Given a subject from BraTS, we visualize its representations across MRI sequence shifts produced by nnInteractive~\cite{isensee2025nninteractive} (rows 1, 2) and anatomix~\cite{dey2024learning} (rows 3, 4), finding that anatomix representations are more stable and suitable for domain generalization. \textit{Rows 5--8:} We now produce a similar visualization for two unpaired subjects with cross-modality domain shift from the AMOS dataset and find the same stability trend.
}
    \label{fig:amos_representations}
\end{figure}

\subsection{Representation Analysis.}
\label{app:representation_analysis}
While the main text visualizes representation channels as RGB images, here we present a broader subset of arbitrarily selected representations to provide a more comprehensive view. 

\subpara{Representation Stability and Choice of Feature Extractor.}
In addition to the quantitative results shown in Table 4 of the main text, we now visualize the representations extracted by the two foundation models we considered (anatomix~\cite{dey2024learning} and nnInteractive~\cite{isensee2025nninteractive}) to qualitatively investigate their stability to domain shifts. In Figure~\ref{fig:amos_representations}, we compare T1 vs. T2 and CT vs. MRI features from BraTS and AMOS, respectively, to assess the consistency of learned features across different imaging modalities. In this comparison, anatomix produces visually more stable and coherent representations under domain shifts, even though nnInteractive has been exposed to a substantially larger corpus of real-world data. This suggests that anatomix may better capture domain-stable features, as needed in our work. 

\begin{figure}[!tb]
  \centering
  \includegraphics[width=1.0\columnwidth]{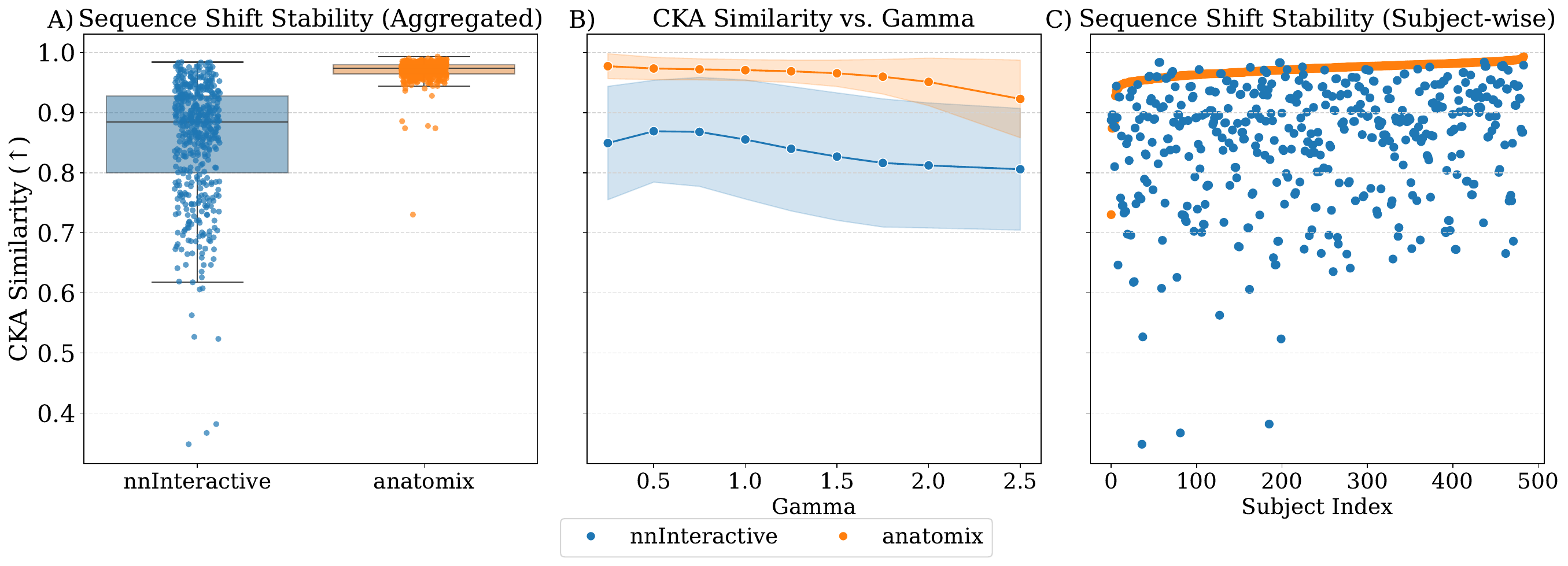}
    \caption{\textbf{CKA Representation Similarity of Foundation Models Under Sequence Shifts.} \textit{A:} CKA~\cite{kornblith2019similarity} between each 3D foundation model's chosen embeddings across MRI sequence shifts given subjects with multiple MRI sequences from BraTS. \textit{B:} Given a fixed sequence, we now perturb image contrast (via gamma adjustments) and visualize CKA as a measure of representational stability under contrast shifts. \textit{C:} A per-subject visualization of the plot in panel \textit{A}.}
  \label{fig:cka_similarity}
\end{figure}

To quantitatively measure representation stability, we compute the centered kernel alignment (CKA) similarity~\cite{kornblith2019similarity} (where higher is more similar) between the representations extracted from co-registered FLAIR and T1 subjects in BraTS. We plot the CKA results for all subjects in Figure~\ref{fig:cka_similarity}A and C, and find that anatomix, being specifically trained for domain stability is stable to these shifts. Further, given random contrast perturbation with Gamma adjustment in Figure~\ref{fig:cka_similarity}B, anatomix maintains robustness across corruption levels. These reasons together demonstrate why we choose anatomix~\cite{dey2024learning} as a stable representation extractor for \methodname over other current foundation models for biomedical volumes.

\subpara{Choice of layer.} Representations can be extracted from arbitrary layers of foundation models. While the original anatomix model~\cite{dey2024learning} uses the final network layer as representations, we conduct a preliminary qualitative analysis for alternative representations. We examine qualitative differences between the penultimate and final block outputs for intra-subject FLAIR vs. T2 sequences and CT vs. MRI abdomen (Figure~\ref{fig:brats_representations}), from the BraTS and AMOS datasets, respectively. The illustrations indicate that, across either FLAIR vs. T2 or CT vs. MR, representations remain relatively stable. However, outputs from the penultimate block retain finer structural details, such as edges and high-frequency information, likely due to contributions from skip connections. In contrast, the final block tends to produce more abstracted representations that may lose some of these fine details. Based on these observations, we chose to use the penultimate block outputs in our method, as representations with stronger edges that are cross-modally consistent are likely to be useful for segmentation.

\begin{figure}[!t]
    \centering
    \includegraphics[width=\linewidth]{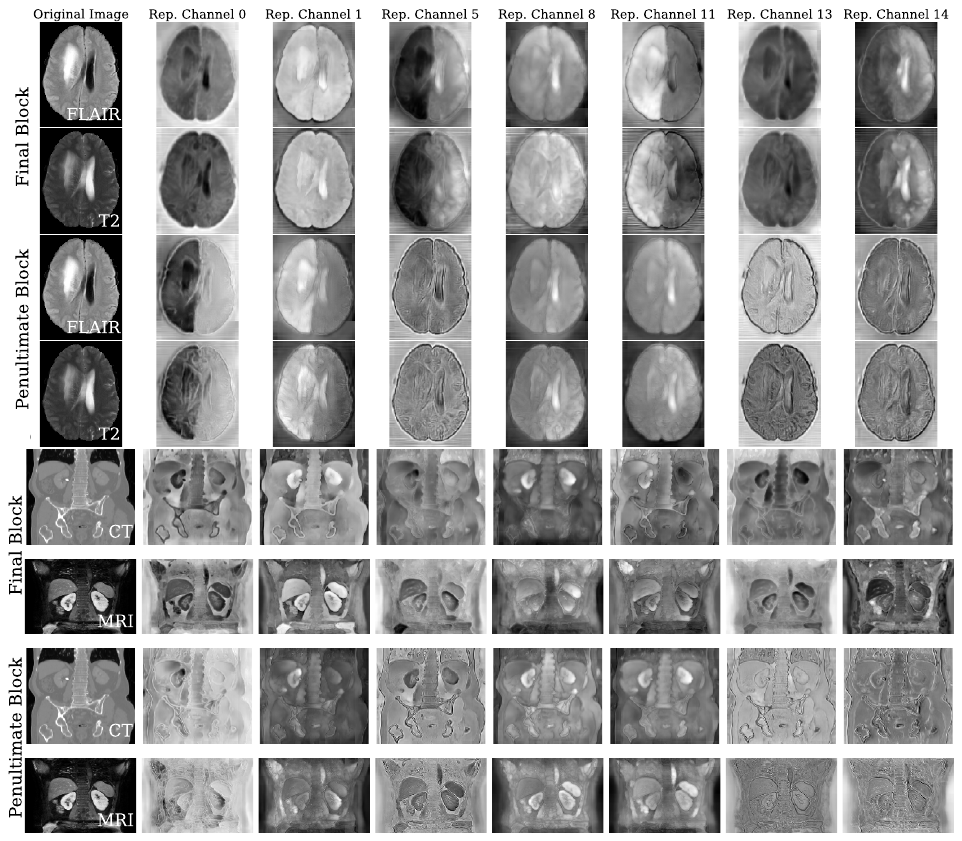}
    \caption{\textbf{Choice of layer from which to use representations.} 
We compare anatomix's final block representations (rows 1–2, 5-6) to its penultimate block representations (rows 3–4, 7-8) for two different datasets with two different kinds of domain shifts. \textit{Rows 1--4:} FLAIR (rows 1 and 3) and T2 (rows 2 and 4) from BraTS, representing intra-subject MRI sequence shifts. \textit{Rows 5--8:} CT (rows 5 and 7) and MRI (rows 6 and 8) from AMOS, representing inter-subject modality shifts. Representation channels (columns 2-7) were arbitrarily selected. Qualitatively, the penultimate output better preserves high-frequency details while retaining inter-domain stability.}
    \label{fig:brats_representations}
\end{figure}

\subsection{Empirical Validation of Condition~\ref{cond:nai}}
\label{app:condition_1}
To empirically support Condition~\ref{cond:nai}, we examine the gradients of the initial convolutional kernel corresponding to the stable inputs. Specifically, we consider the regime where we mask the original input versus the regime where we mask nothing. Condition~\ref{cond:nai} predicts that these gradients have non-conflicting directions, \ie, their inner product is $\ge 0$. We plot a histogram of the cosine similarities (to bound values between -1 and 1) between the joint training gradient $\nabla_{W_{s}}\mathcal{R}_{(1,1)}$ and the stable-only gradient $\nabla_{W_{s}}\mathcal{R}_{(0,1)}$, where $W_{s}$ denotes the kernel slices of the first convolutional layer corresponding to the stable input channels. The cosine similarity is used as a normalized surrogate for the inner product. We compute this quantity every 100 training steps on a held-out batch, evaluating the model in inference mode to avoid corrupting batch-normalization statistics. Figure~\ref{fig:cond_cosine_similarity} shows that the overwhelming majority of measurements fall well above zero throughout training, providing empirical support that the joint training gradient does not oppose the stable-input gradient. 

\begin{figure}[!t]
    \centering
    \includegraphics[width=\linewidth]{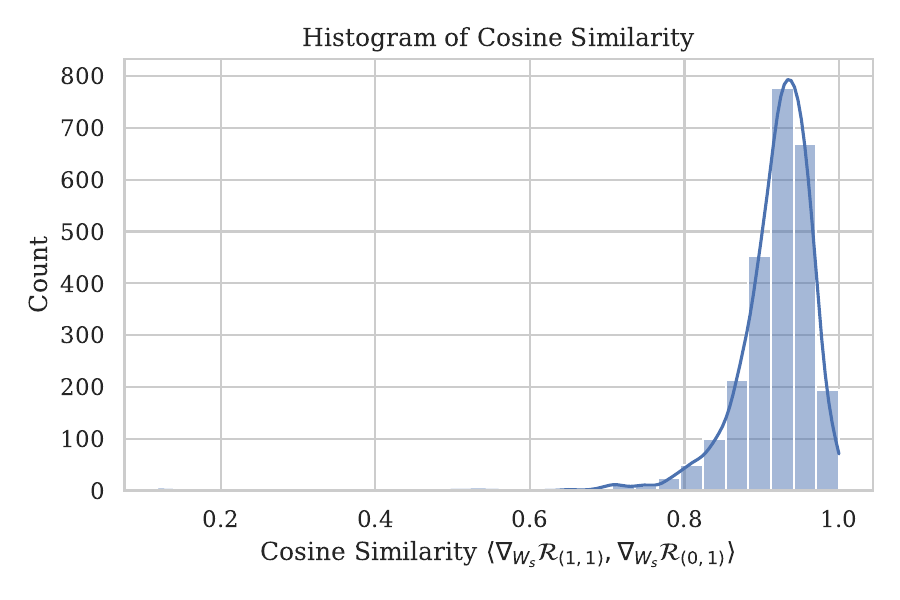}
    \caption{\textbf{Histogram of Cosine Similarities.} Cosine similarities for the two masked regimes of (1,0) and (1,1) for our method sampled every 100 steps during an "All-data" training experiment  on BraTS~\cite{menze2014multimodal, antonelli2022medical} to empirically support Condition~\ref{cond:nai}. A KDE curve is fitted to the histogram for clarity.}

    \label{fig:cond_cosine_similarity}
\end{figure}

\subsection{Finetuning the Foundation Model $f_{\phi}$}
Our proposed approach uses frozen foundation model features for training. However, foundation model weights are good initializations for training as well. We now investigate the standard approach of simply finetuning the foundation model $f_{\phi}$ instead of our proposed approach and examine its robustness to domain shifts. We finetune the anatomix~\cite{dey2024learning} model ($f_{\phi}$ in the main text) in the all-data setting. In particular, we train in two regimes: (1) finetuning only the foundation model's decoder, while maintaining a frozen encoder (F.T. Decoder), and (2) finetuning the entire model (F.T. Extractor). 

As seen in Table~\ref{tab:finetune_extractor}, our method significantly outperforms each of these approaches on every dataset and further motivates our approach. This result is intuitive as finetuning the foundation model using an in-domain ERM loss would not guarantee any robustness on distribution-shifted test samples, as shown by our theoretical analyses.

\begin{table}[t]
\centering
\caption{\textbf{Finetuning domain-stable foundation models does not yield domain generalization}. Ablation comparing finetuning all of the foundation model's weights (F.T. Extractor), finetuning its decoder-alone (F.T. Decoder), and our proposed approach in the All-data setting.}
\scriptsize
\begin{tabular}{lccccccc}
\toprule
\textbf{Method} & \textbf{AMOS}\cite{ji2022amos} & \textbf{BraTS}\cite{menze2014multimodal, antonelli2022medical} & \textbf{CoW}\cite{topcowchallenge} & \textbf{HVSMR}\cite{pace2024hvsmr} & \textbf{PanDG}\cite{zhang2025rethink} & \textbf{Prostate}\cite{liu2021feddg}\\
\midrule
F.T. Extractor & 56.8 (1.5) & 37.7 (1.7) & 17.3 (1.7) & 64.6 (3.2) & 23.7 (2.6) & 84.2 (0.5) \\
F.T. Decoder & 57.2 (1.5) & 39.5 (1.7) & 18.5 (1.7) & 63.2 (2.9) & 27.7 (2.6) & 83.8 (0.7) \\
Ours & $\textbf{67.5 (0.9)}$ & $\textbf{46.2 (1.6)}$ & $\textbf{47.8 (1.3)}$ & $\textbf{65.8 (3.4)}$ & $\textbf{46.0 (1.9)}$ & $\textbf{84.7 (0.5)}$\\
\bottomrule
\end{tabular}
\label{tab:finetune_extractor}
\end{table}

\subsection{Alternative Feature Combination/Fusion Approaches.}
Our proposed method applies feature concatenation and view masking to the stable and unstable features. We now perform additional experimentation on select datasets to investigate alternative feature combination methods in the fully supervised setting and examine their ability to generalize across domains. Specifically, we train a method that uses Squeeze+Excite~\cite{1709.01507} layers (a form of channel attention) to combine extracted representations and the original image before feeding it into the UNet. 
As in Table~\ref{tab:feature_fusion_methods}, our proposed view masking approach outperforms channel attention on our domain generalization benchmarks as channel attention layers trained on in-domain ERM losses cannot ensure generalization to out-of-domain test data.

\begin{table}[h]
\centering
\caption{\textbf{Alternative approaches for feature combination.} A comparison of alternative feature combination methods: squeeze+excite channel attention vs. our input view masking.}%
\footnotesize
\begin{tabular}{lccccccc}
\toprule
\textbf{Ablations} & \textbf{AMOS}\cite{ji2022amos} & \textbf{BraTS}\cite{menze2014multimodal, antonelli2022medical} \\
\midrule
Squeeze+Excite\cite{1709.01507} & 64.3 (1.1) & 38.8 (1.7) \\
Ours & $\textbf{67.5 (0.9)}$ & $\textbf{46.2 (1.6)}$ \\
\bottomrule
\end{tabular}
\label{tab:feature_fusion_methods}
\end{table}

\subsection{Baseline Hyperparameter Tuning.} For a robust comparison, wherever applicable, we tuned each baseline's perturbation frequency on each dataset's validation set for both the fully supervised ("\textbf{all-data}") and ("\textbf{few-shot}") settings. Specifically, each baseline typically applied its proposed method with stochastic probability $p$. For GIN+IPA\cite{ouyang2022causality}, we further swept values of 1, 10, 20 for its KL divergence consistency loss term (the original recommended value was 10). The hyperparameter(s) leading to the highest performance on the validation set was used to report test set results in Tables~\ref{tab:main_results_full} and~\ref{tab:main_results_fewshot}. Validation set tuning results can be found in Tables~\ref{tab:baseline_tuning_full} and~\ref{tab:baseline_tuning_fs}.
\begin{table}[!ht]
\centering
 
\begin{minipage}[t]{0.49\textwidth}
\centering
\caption{\textbf{All-data hyperparameter tuning.} Mean Dice score for each applicable baseline on each dataset. \textbf{Bolded} values are the highest in each dataset. $k_l$ is the KL divergence consistency loss weight and $p$ denotes the probability of applying each method.}
\label{tab:baseline_tuning_full}
\tiny
 
\scriptsize
\setlength{\tabcolsep}{1.8pt}
\begin{tabular}{lccc@{\hskip 0.5em}ccc}
\toprule
 & \multicolumn{3}{c}{AMOS} & \multicolumn{3}{c}{BraTS} \\
\cmidrule(lr){2-4} \cmidrule(lr){5-7}
 & $k_l$=5 & $k_l$=10 & $k_l$=20 & $k_l$=5 & $k_l$=10 & $k_l$=20 \\
\midrule
GIN+IPA   & 65.8 & 65.7 & \textbf{66.3} & 45.9  & 45.5 & \textbf{46.1} \\
\midrule
 & $p$=0.25 & $p$=0.50 & $p$=0.75 & $p$=0.25 & $p$=0.50 & $p$=0.75 \\
\midrule
MixUp     & 60.7  & 60.8 & \textbf{61.8} & 42.3 & 41.1 & \textbf{43.2} \\
CutOut    & \textbf{59.9} & 57.2 & 57.9 & 40.8  & 41.6 & \textbf{42.2} \\
DSBN      & \textbf{61.4} & 59.1 & 58.2 & 42.3 & 41.8 & \textbf{43.1} \\
MixStyle  & \textbf{58.9} & 57.8 & 57.9 & \textbf{41.9} & 41.2 & 41.5 \\
RSC       & 58.7 & \textbf{61.3} & 59.2 & 42.8 & 40.4 & \textbf{43.9} \\
\midrule
\midrule
 & \multicolumn{3}{c}{CoW} & \multicolumn{3}{c}{HVSMR} \\
\cmidrule(lr){2-4} \cmidrule(lr){5-7}
 & $k_l$=5 & $k_l$=10 & $k_l$=20 & $k_l$=5 & $k_l$=10 & $k_l$=20 \\
\midrule
GIN+IPA   & 42.5 & \textbf{43.4} & 43.0 & 61.0 & \textbf{61.2} & 60.8 \\
\midrule
 & $p$=0.25 & $p$=0.50 & $p$=0.75 & $p$=0.25 & $p$=0.50 & $p$=0.75 \\
\midrule
MixUp     & 30.9 & 37.6 & \textbf{41.5} & 63.1 & \textbf{64.4} & 62.7 \\
CutOut    & \textbf{28.6} & 24.3 & 23.4 & \textbf{63.5} & 63.5 & 62.6 \\
DSBN      & 31.9 & \textbf{33.1} & 24.5 & \textbf{64.0} & 63.3 & 63.7 \\
MixStyle  & 20.9 & \textbf{30.9} & 20.4 & 62.4 & 62.1 & \textbf{62.7} \\
RSC       & 18.1 & \textbf{30.8} & 26.5 & 62.6 & 62.3 & \textbf{63.4} \\
\midrule
\midrule
 & \multicolumn{3}{c}{PanDG} & \multicolumn{3}{c}{Prostate} \\
\cmidrule(lr){2-4} \cmidrule(lr){5-7}
 & $k_l$=5 & $k_l$=10 & $k_l$=20 & $k_l$=5 & $k_l$=10 & $k_l$=20 \\
\midrule
GIN+IPA   & \textbf{54.9} & 54.4 & 54.4 & 86.0 & \textbf{86.3} & 85.9 \\
\midrule
 & $p$=0.25 & $p$=0.50 & $p$=0.75 & $p$=0.25 & $p$=0.50 & $p$=0.75 \\
\midrule
MixUp     & 59.9 & 58.4 & \textbf{60.0} & \textbf{86.1} & 85.8 & 83.9 \\
CutOut    & 52.8 & 53.9 & \textbf{57.4} & 86.7 & 86.8 & \textbf{87.3} \\
DSBN      & \textbf{55.9} & 53.8 & 54.6 & 86.9 & 86.3 & \textbf{87.0} \\
MixStyle  & 53.0 & \textbf{53.2} & 52.7 & \textbf{86.7} & 86.6 & 86.6 \\
RSC       & 52.1 & 53.0 & \textbf{54.6} & 86.9 & \textbf{86.9} & 86.6 \\
\bottomrule
\end{tabular}
\end{minipage}
\hfill
\begin{minipage}[t]{0.49\textwidth}
\centering
\captionof{table}{\textbf{Few-shot hyperparameter tuning.} Mean Dice score for each applicable baseline on each dataset. \textbf{Bolded} values are the highest in each dataset. $k_l$ is the KL divergence consistency loss weight and $p$ denotes the probability of applying each method.}
\label{tab:baseline_tuning_fs}
\vspace{6pt}
 
\tiny
 
\scriptsize
\setlength{\tabcolsep}{1.8pt}
\begin{tabular}{lccc@{\hskip 0.5em}ccc}
\toprule
 & \multicolumn{3}{c}{AMOS} & \multicolumn{3}{c}{CoW} \\
\cmidrule(lr){2-4} \cmidrule(lr){5-7}
 & $k_l$=5 & $k_l$=10 & $k_l$=20 & $k_l$=5 & $k_l$=10 & $k_l$=20 \\
\midrule
GIN+IPA   & 5.9 & 39.6 & 25.6 & 21.0 & 17.0 & 28.0 \\
\midrule
 & $p$=0.25 & $p$=0.50 & $p$=0.75 & $p$=0.25 & $p$=0.50 & $p$=0.75 \\
\midrule
MixUp     & 11.6 & 16.4 & \textbf{18.0} & 18.8 & 16.2 & \textbf{35.8} \\
CutOut    & \textbf{10.8} & 6.9 & 3.9 & \textbf{20.4} & 15.0 & 17.8 \\
DSBN      & 18.0 & \textbf{20.0} & 9.1 & \textbf{24.6} & 22.1 & 14.7 \\
MixStyle  & 3.4 & \textbf{5.0} & 3.4 & 15.5 & 16.5 & \textbf{17.9} \\
RSC       & 3.4 & 2.5 & \textbf{4.7} & 12.6 & 13.5 & \textbf{19.4} \\
\midrule
\midrule
 & \multicolumn{3}{c}{HVSMR} & \multicolumn{3}{c}{Prostate} \\
\cmidrule(lr){2-4} \cmidrule(lr){5-7}
 & $k_l$=5 & $k_l$=10 & $k_l$=20 & $k_l$=5 & $k_l$=10 & $k_l$=20 \\
\midrule
GIN+IPA   & 42.5 & \textbf{43.4} & 43.0 & 61.0 & \textbf{61.2} & 60.8 \\
\midrule
 & $p$=0.25 & $p$=0.50 & $p$=0.75 & $p$=0.25 & $p$=0.50 & $p$=0.75 \\
\midrule
MixUp     & 45.4 & \textbf{46.5} & 43.8 & \textbf{79.0} & 78.4 & 78.5 \\
CutOut    & \textbf{41.5} & 39.9 & 40.1 & 69.6 & \textbf{72.3} & 67.0 \\
DSBN      & 42.9 & \textbf{45.0} & 44.5 & \textbf{71.9} & 49.1 & 56.1 \\
MixStyle  & \textbf{41.1} & 40.2 & 41.0 & 65.8 & 60.8 & \textbf{67.3} \\
RSC       & 39.6 & \textbf{43.8} & 42.0 & 60.8 & \textbf{65.3} & 63.3 \\
\bottomrule
\end{tabular}
\end{minipage}
\end{table}

\subsection{\methodname Hyperparameter Sensitivity}
\label{app:r0.5}
\methodname has a single hyperparameter $r$ corresponding to the masking rate. While our primary results on the test splits in Tables~\ref{tab:main_results_full} and~\ref{tab:main_results_fewshot} used choices of $r$ tuned on each dataset's validation split, such tuning is not strictly necessary for \methodname to outperform existing domain generalization baselines. For example, if validation data is unavailable, simply setting $r=0.5$ corresponding to 50\% masking is intuitive, as this provides balanced views of unstable inputs during training. We report test set results when just using $r=0.5$ for both the "all-data" and "few-shot" settings across all datasets as "Ours ($r=0.5$)" in Tables \ref{tab:main_results_full_r0.5} and \ref{tab:main_results_fewshot_r0.5}, respectively. Despite not tuning \methodname's sole hyperparameter, it outperforms existing domain generalization baselines in the majority of settings and maintains the best average ranking. %
We also emphasize that all baselines in Tables~\ref{tab:main_results_full_r0.5} and~\ref{tab:main_results_fewshot_r0.5} have their hyperparameters tuned on validation data, whereas ``Ours ($r=0.5$)" is untuned, constituting an extreme stress test resulting in smaller margins.

\begin{table}[!t]
\caption{\textbf{All-data domain generalization results for \textit{untuned} masking rate.} Mean Dice score (w/ std. error) across datasets against domain generalization baselines, with avg. ranking on the far right. Companion to Table~\ref{tab:main_results_full} with a default untuned masking rate of $r=0.5$.}
\centering
\footnotesize
\setlength{\tabcolsep}{2.5pt}
\begin{tabular}{l c c c c c c c}
\toprule
\textbf{Method} &
\begin{tabular}{c} \textbf{AMOS}\cite{ji2022amos}\\ \footnotesize Modality \end{tabular} &
\begin{tabular}{c} \textbf{BraTS}\cite{antonelli2022medical}\\ \footnotesize Sequence \end{tabular} &
\begin{tabular}{c} \textbf{CoW}\cite{topcowchallenge}\\ \footnotesize Modality \end{tabular} &
\begin{tabular}{c} \textbf{HVSMR}\cite{pace2024hvsmr}\\ \footnotesize Disease \end{tabular} &
\begin{tabular}{c} \textbf{PanDG}\cite{zhang2025rethink}\\ \footnotesize Phase \end{tabular} &
\begin{tabular}{c} \textbf{Prostate}\cite{liu2021feddg}\\ \footnotesize Site \end{tabular} &
\textbf{\begin{tabular}{c} Avg.\\ Rank \end{tabular}} \\
\midrule
ERM & 57.4 (1.7) & 38.9 (1.8) & 24.6 (1.5) & 63.6 (3.0) & 25.3 (2.5) & 83.0 (0.7) & 8.0 \\
GIN\cite{ouyang2022causality} & 65.1 (0.7) & 29.9 (1.6) & 17.9 (1.5) & 64.1 (3.0) & 41.1 (1.8) & 82.1 (0.8) & 7.2 \\
GIN+IPA\cite{ouyang2022causality} & $\underline{66.5 (0.8)}$ & $\underline{44.7 (1.6)}$ & $\underline{37.6 (1.9)}$ & 64.7 (2.6) & $\textbf{50.3 (1.4)}$ & $\underline{84.1 (0.6)}$ & $\underline{2.5}$ \\ 
MixUp\cite{zhang2017mixup} & 61.5 (0.9) & 41.2 (1.7) & 36.3 (1.5) & 64.5 (3.2) & 26.7 (2.5) & 82.5 (0.7) & 5.5 \\
CutOut\cite{devries2017improved} & 55.5 (1.7) & 41.8 (1.6) & 26.5 (1.9) & 64.9 (3.5) & 30.5 (2.4) & 84.0 (0.6) & 5.0 \\
DSBN\cite{zhou2022generalizable} & 58.0 (1.4) & 40.5 (1.7) & 29.6 (1.9) & \underline{$66.1 (3.2)$} & 26.3 (2.6) & 83.4 (0.6) & 5.2 \\
MixStyle\cite{zhou2021domain} & 53.9 (1.7) & 40.7 (1.6) & 27.9 (1.4) & 64.4 (2.6) & 24.0 (2.5) & 82.3 (0.7) & 7.7 \\
RSC\cite{huang2020self}      & 60.8 (1.3) & 42.1 (1.6) & 26.0 (1.9) & $\textbf{66.2 (2.9)}$ & 28.2 (2.4) & 83.4 (0.6) & 4.3 \\
nnUNet\cite{isensee2021nnu}   & 48.0 (2.7) & 35.5 (1.8) & 13.7 (2.0) & 65.0 (2.5) & 30.1 (2.7) & 71.6 (3.6) & 8.0  \\
\midrule
Ours ($r=0.5$) & $\textbf{66.8 (1.0)}$ & $\textbf{44.8 (1.5)}$ & $\textbf{47.2 (1.3)}$ & 65.8 (3.4) & $\underline{46.0  (1.9)}$ & $\textbf{84.7 (0.5)}$ & $\textbf{1.5}$ \\
\bottomrule
\end{tabular}
\label{tab:main_results_full_r0.5}
\end{table}

\begin{table}[!t]
\caption{
\textbf{Few-shot domain generalization results for \textit{untuned} masking rate.} Mean Dice score (w/ std. error) across datasets against domain generalization baselines, with avg. ranking on the far right. Companion to Table~\ref{tab:main_results_fewshot} with a default untuned masking rate of $r=0.5$.
}
\centering
\footnotesize
\setlength{\tabcolsep}{3pt}
\begin{tabular}{l c c c c c}
\toprule
\textbf{Method} &
\begin{tabular}{c} \textbf{AMOS}\\ {\scriptsize Modality} \end{tabular} &
\begin{tabular}{c} \textbf{CoW}\\ {\scriptsize Modality} \end{tabular} &
\begin{tabular}{c} \textbf{HVSMR}\\ {\scriptsize Disease} \end{tabular} &
\begin{tabular}{c} \textbf{Prostate}\\ {\scriptsize Site} \end{tabular} &
\begin{tabular}{c} \textbf{Avg.}\\ {\scriptsize Rank} \end{tabular} \\
\midrule
ERM & 5.4 (1.0) & 18.2 (1.3) & 40.4 (3.8) & 27.8 (5.0) & 8.0 \\
GIN & 6.0 (0.7) & 13.4 (1.1) & \underline{$44.9 (4.0)$} & 20.3(4.3) & 6.5 \\
GINIPA  & $\underline{37.8 (1.5)}$ & $23.8$ (1.7) & $44.5$(3.8) & $27.9$(5.1) & 3.5 \\
MixUp    & $15.6$ (1.6) & $\underline{33.7 (1.2)}$ & $43.7$ (3.9) & $\underline{53.2 (3.4)}$ & 3.5 \\
CutOut & 8.4 (1.1) & 18.2 (1.4) & 41.3 (3.6) & 40.8 (4.5) & 5.5 \\
DSBN     & $19.3$ (1.8) & $22.2$ (1.5) & $44.2$ (3.7) & $44.1$ (5.2) & 3.5 \\
MixStyle & 3.8 (0.8) & 17.6 (1.1) & 37.7 (3.5) & 25.0 (5.1) & 8.3 \\
RSC & 3.6 (1.0) & 18.3 (1.3) & 44.1 (3.9) & 28.5 (5.2) & 6.0 \\
\midrule
Ours ($r=0.5$)     & $\textbf{48.1 (1.7)}$ & $\textbf{40.3 (1.5)}$ & $\textbf{46.3 (3.8)}$ & $\textbf{64.2 (4.4)}$ & $\textbf{1.0}$ \\
\bottomrule
\end{tabular}
\label{tab:main_results_fewshot_r0.5}
\end{table}

\subsection{Robustness Analysis.} 
\label{app:robustness_analysis}
We now assess how robust networks trained with each DG method are to domain shifts that can be simulated and are commonly encountered in practice. As domain shifts/corruptions, we employ randomly generated bias fields and gamma/contrast adjustments as corruptions, as these frequently change between MRI domains. As random bias fields vary spatially, we generate three different bias fields at each corruption level and report the average performance, whereas gamma corruption is deterministic. We note that all methods were trained with strong gamma and bias augmentation, except for nnUNet, which uses its default (conservative) augmentation pipeline and is thus excluded from this experiment for fair comparison. As shown in Figure~\ref{fig:corruptions}, \methodname maintains its robustness under gamma and bias corruptions across most datasets on held-out test datasets. We again find that the next most stable method changes in between datasets, indicating that \methodname offers the most consistent benefits across anatomies and domain shift types.

We now also examine the out-of-distribution robustness of the converged optimal weights of each DG method on held-out test data. As flat minima often generalize better~\cite{hochreiter1997flat, keskar2016large, li2018visualizing}, methods that maintain performance under weight perturbations are more robust. To assess flatness, we add Gaussian noise of varying magnitudes to all parameters of the trained predictors and measure the resulting performance drop. For each noise level~$\alpha$, we average results over five random perturbations. Figure~\ref{fig:corruptions} shows that \methodname degrades gracefully under increasing noise, indicating a more robust solution than competing methods, which decline sharply with weight corruption across most datasets.

\begin{figure}[!t]
    \centering
    \includegraphics[width=1.0\linewidth]{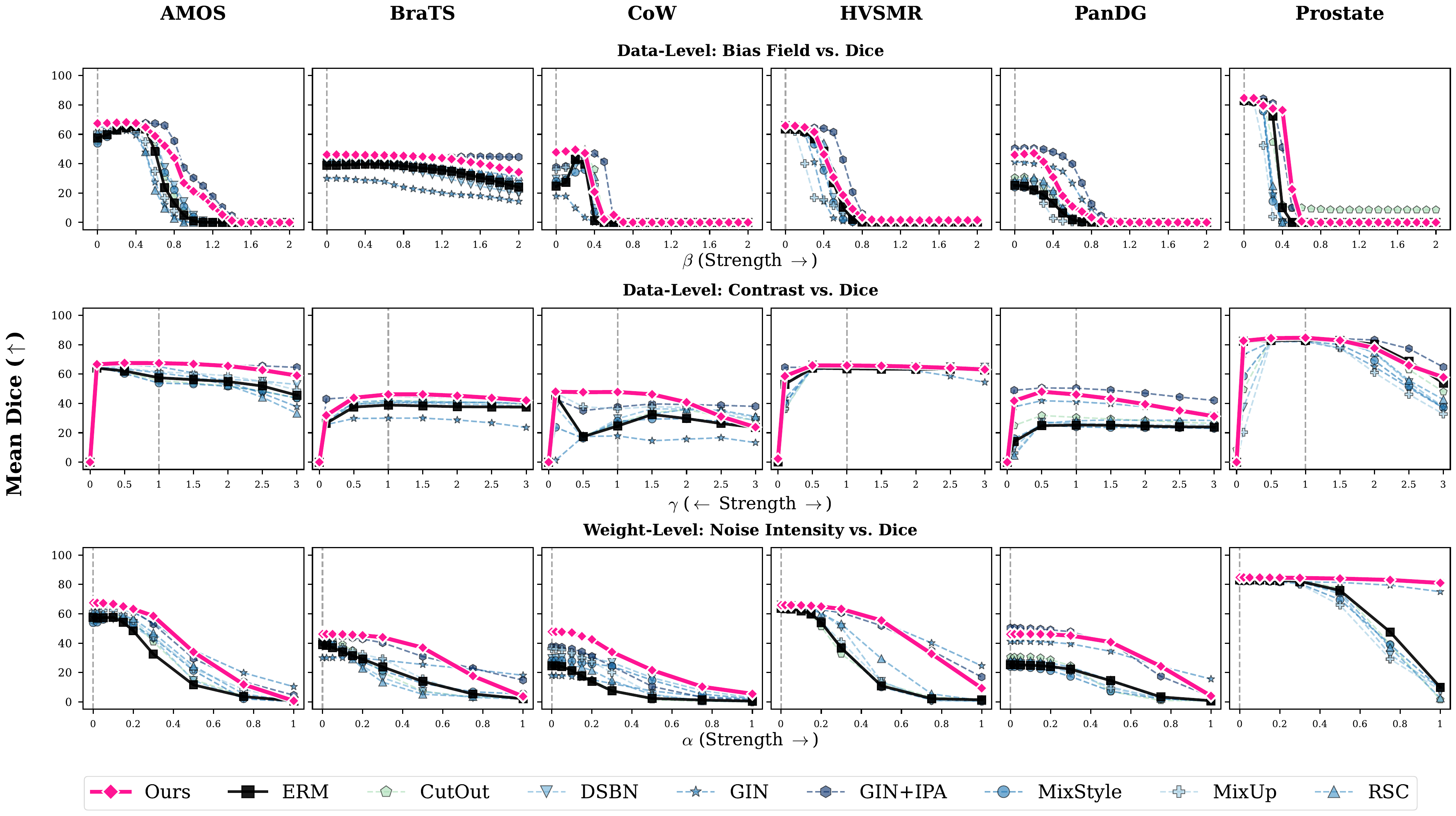}
    \caption{\textbf{Corruption Analysis.} We apply simulated domain shifts with increasing strengths, via bias (\textbf{top}) and contrast/gamma (\textbf{middle}) corruptions and observe that \methodname maintains high robustness. In the \textbf{bottom} row, we perturb the trained model weights with additive Gaussian noise, and find that \methodname is stable under these corruptions, indicating a flatter and more generalizable solution. Dashed vertical line in all plots signifies the "identity" or no corruption level.}
    \label{fig:corruptions}
\end{figure}

\subsection{Runtime Analysis.}
\label{app:runtime}
\methodname adds a forward pass through a frozen foundation model $f_{\phi}$ whose computational overhead should be quantified. Here, we calculate the training time for 1000 training steps for our method, GIN+IPA~\cite{ouyang2022causality}, MixUp~\cite{zhang2017mixup}, and ERM, on the BraTS benchmark and average this across 3 different trials. All training and inference runtimes below were calculated on a single NVIDIA RTX 6000 Ada.

As indicated by the results in Table~\ref{tab:runtime}, we can see that our method introduces a 25.8\% overhead during training when compared to ERM. However, in the setting of high-performance domain generalization methods, we posit that this overhead is not large. For example, the most competitive method, GIN+IPA, introduces a 449.7\% increased train time. This is partly due to its B-spline computation and increased effective batch size (see Section~\ref{app:baseline_implementation} for details). 

To quantify the overhead that \methodname adds at inference, we compare runtimes across methods using 100 samples from BraTS and average these times across 3 different trials. As reported in Table~\ref{tab:inference_runtime}, our method has causes a 8.0\% (17.5ms) increase in inference time, which is well within limits for real-time volumetric analysis (< 0.5s). We consider this computational overhead a favorable trade-off given the out-of-distribution robustness gains reported in Section~\ref{subsec:main_results}.  We speculate that kernel-level and operator-fusing optimization methods would further close this gap.

\begin{table}[!ht]
\centering
\begin{minipage}[t]{0.48\textwidth}
\centering
\caption{\textbf{Training Runtime.} Average runtime over 3 trials on BraTS for 1000 training steps. $\Delta$ shows relative increase vs. ERM.}
\label{tab:runtime}
\footnotesize
\setlength{\tabcolsep}{3pt}
\begin{tabular}{lll}
\toprule
Method & Run Time (s) & $\Delta$ (\%) \\
\midrule
ERM & 259.3 $\pm$ 0.6 & 0\% \\
Ours & 326.1 $\pm$ 0.3 & +25.8\% \\
GIN & 269.6 $\pm$ 0.3 & +4.0\% \\
GIN+IPA & 1165.8 $\pm$ 0.5 & +449.7\% \\
MixUp & 262.1 $\pm$ 0.4 & +1.1\% \\
\bottomrule
\end{tabular}
\end{minipage}
\hfill
\begin{minipage}[t]{0.48\textwidth}
\centering
\captionof{table}{\textbf{Inference Runtime.} Average inference time over 3 trials on BraTS for 20 volumes. $\Delta$ shows relative increase vs. ERM.}
\label{tab:inference_runtime}
\vspace{6pt}
\footnotesize
\setlength{\tabcolsep}{3pt}
\begin{tabular}{lll}
\toprule
Method & Time (ms/vol) & $\Delta$ (\%) \\
\midrule
ERM & 217.6 $\pm$ 2.3 & 0\% \\
Ours & 235.1 $\pm$ 12.8 & +8.0\% \\
\bottomrule
\end{tabular}
\end{minipage}
 
\end{table}

\subsection{Broader Impacts}

This work pertains to achieving better out-of-domain robustness in biomedical segmentation tasks. There are potential societal benefits to this work (\eg, reducing clinician annotation effort for training these networks) and it is implausible that negative impacts will arise from it.

\clearpage
\newpage

\section{Implementation Details}
\label{app:implementation}

\subsection{Training Setup}
We use the representations from its anatomix's~\cite{dey2024learning} penultimate convolutional block and instance normalize its representations to be channel-wise zero mean and unit standard deviation. For training, all methods use AdamW~\cite{loshchilov2017decoupled} with weight decay of $10^{-4}$
, a per-step cosine learning rate schedule from $2\times 10^{-4}$ to $0$, and a Dice and cross entropy loss (equally weighted).

\subsection{Augmentations}
To best estimate the effects of \methodname in addition to data augmentation, we use an extensive augmentation pipeline (provided below) for all the experiments, except for nnUNet~\cite{isensee2021nnu}, which has its own augmentation framework and is recommended by its authors to be used as-is. Listed functions correspond to MONAI conventions~\cite{cardoso2022monai}. We use flips, rotations, resolution degradations, noise, bias fields, contrast adjustments, blurring, sharpening, and affine deformations, each with 0.33 probability. We omit flips and rotations in datasets where left/right correspondence is important, as indicated below.

\begin{minted}[fontsize=\scriptsize, breaklines, frame=lines, bgcolor=gray!5]{python}
opts.aug_prob = 0.33
train_transforms = Compose([
    LoadImaged(keys=["image", "label"]),
    EnsureChannelFirstd(keys=["image", "label"]),
    EnsureTyped(keys=["image", "label"]),
    Spacingd(keys=["image", "label"], pixdim=opts.spacing, mode=("bilinear", "nearest")),
    SpatialPadd(keys=["image", "label"], spatial_size=opts.crop_size),
    ScaleIntensityd(keys="image") if opts.dataset not in ["AMOS", "COW"] else ScaleIntensityRanged(keys="image", a_min=-1024, a_max=1024, b_min=0.0, b_max=1.0,clip=True,),
    RandAxisFlipd(keys=["image", "label"], prob=opts.aug_prob) if opts.dataset not in ["AMOS", "COW"] else Identity(keys=["image", "label"]),
    RandRotate90d(keys=["image", "label"], prob=opts.aug_prob, max_k=3, spatial_axes=(0, 1)) if opts.dataset not in ["AMOS", "COW"] else Identity(keys=["image", "label"]),
    RandSimulateLowResolutiond(keys="image", prob=opts.aug_prob, zoom_range=(0.25, 1.0),),
    RandGaussianNoised(keys="image", prob=opts.aug_prob),
    RandBiasFieldd(keys="image", prob=opts.aug_prob, coeff_range=(0.0, 0.1)),
    RandGibbsNoised(keys="image", prob=opts.aug_prob, alpha=(0.0, 0.33)),
    RandAdjustContrastd(keys="image", prob=opts.aug_prob),
    RandGaussianSmoothd(keys="image", prob=opts.aug_prob, sigma_x=(0.0, 0.1), sigma_y=(0.0, 0.1), sigma_z=(0.0, 0.1)),
    RandGaussianSharpend(keys="image", prob=opts.aug_prob),
    RandAffined(keys=["image", "label"], prob=opts.aug_prob, mode=("bilinear", "nearest"), rotate_range=(np.pi / 4, np.pi / 4, np.pi / 4), scale_range=(0.2, 0.2, 0.2), shear_range=(0.2, 0.2, 0.2), spatial_size=opts.crop_size, padding_mode='zeros',),
    ScaleIntensityd(keys="image"), [0,1]
])
\end{minted}

\begin{table}[th!]
\centering
\caption{U-Net architectural details. \texttt{Conv-BN-ReLU} refers to a sequence of 3D convolutional with $3 
\times 3 \times 3$ kernels, batch normalization, and pointwise ReLU non-linearities. $n_{in}$ is the number if input channels. $n_c$ is the channel width multiplier and $n$ is the number of output channels. }
\begin{tabular}{c l}
\hline
\textbf{Block} & \textbf{Contents} \\
\hline
0  & Conv-BN-ReLU($n_{in}, n_c$) \\
1  & Conv-BN-ReLU($n_c$) \\
2  & Conv-BN-ReLU($n_c$) \\
3  & MaxPool(2), Conv-BN-ReLU($2n_c$) \\
4  & Conv-BN-ReLU($2n_c$) \\
5  & Conv-BN-ReLU($2n_c$) \\
6  & MaxPool(2), Conv-BN-ReLU($4n_c$) \\
7  & Conv-BN-ReLU($4n_c$) \\
8  & Conv-BN-ReLU($4n_c$) \\
9  & MaxPool(2), Conv-BN-ReLU($8n_c$) \\
10  & Conv-BN-ReLU($8n_c$) \\
11  & Conv-BN-ReLU($8n_c$) \\
12  & MaxPool(2), Conv-BN-ReLU($16n_c$) \\
13 & Conv-BN-ReLU($16n_c$) \\
14 & Conv-BN-ReLU($16n_c$) \\
15 & MaxPool(2), Conv-BN-ReLU($32n_c$) \\
16 & Conv-BN-ReLU($32n_c$) \\
17 & Conv-BN-ReLU($32n_c$) \\
18 & Upsample $2\times$, Concatenate with layer 14 \\
19 & Conv-BN-ReLU($32n_c$) \\
20 & Conv-BN-ReLU($32n_c$) \\
21 & Upsample $2\times$, Concatenate with layer 11 \\
22 & Conv-BN-ReLU($16n_c$) \\
23 & Conv-BN-ReLU($16nc$) \\
24 & Upsample $2\times$, Concatenate with layer 8 \\
25 & Conv-BN-ReLU($4nc$) \\
26 & Conv-BN-ReLU($4nc$) \\
27 & Upsample $2\times$, Concatenate with layer 5 \\
28 & Conv-BN-ReLU($2nc$) \\
29 & Conv-BN-ReLU($2nc$) \\
30 & Upsample $2\times$, Concatenate with layer 2 \\
31 & Conv-BN-ReLU($nc$) \\
32 & Conv-BN-ReLU($nc$) \\
33 & Conv-BN-ReLU($nout$) \\
\hline
\end{tabular}
\label{tab:network-architecture}
\end{table}

\subsection{Architecture}
We provide the base 3D UNet segmentation architecture used throughout this paper in Table~\ref{tab:network-architecture}. All methods used this base architecture except for nnUNet~\cite{isensee2021nnu}, which is recommended to be used as is.

\subpara{Notation:} 
\begin{itemize}
    \item $nc$: base number of channels
    \item $nout$: number of output channels in the final layer
    \item Conv-BN-ReLU$(c)$: A Convolution $\rightarrow$ BatchNorm $\rightarrow$ ReLU with $c$ output channels block
    \item MaxPool$(k)$: Max pooling with kernel size $k$
    \item Upsample $k\times$: nearest-neighbor upsampling by factor $k$
    \item ``Concatenate with layer X'': skip connection via concatenation
\end{itemize}

\subsection{Feature Extractor}
Our paper evaluates two large-scale medical imaging foundation models as representation extractors for \methodname: anatomix~\cite{dey2024learning} and nnInteractive~\cite{isensee2025nninteractive}. Below, we describe their specific model configurations and checkpoints used in our experiments.

\subpara{anatomix.} We use the publicly released weights provided in the official GitHub repository, specifically the \texttt{anatomix.pth} checkpoint. These weights correspond to the pretrained model described in the referenced work and require no additional modification.

\subpara{nnInteractive.} For nnInteractive, we rely on the reference implementation available in the associated GitHub repository. All experiments use nnInteractive v1.0, fold 0, with the \texttt{checkpoint\_final.pth} model. Consistent with the setup described in the main text, we extract features from the final decoder layer for all comparisons and ablations.

\subsection{Dataset Pre-processing and Splits}
\label{app:dataset_preprocessing}
Here, we describe the domain shifts present in the training, validation, and testing sets and summarize the corresponding dataset statistics and splits in Table~\ref{tab:preprocessing_datasets}. Specifically, the training and evaluation domains for each dataset are noted in Section~\ref{sec:experiments} and do not share any subjects between any of the training, validation, and testing splits. Dataset splitting scripts are provided in the accompanying code for further clarity. 
All methods were trained, validated, and tested using the same voxel spacing.
Voxel spacing was determined using the nnUNet planning pipeline~\cite{isensee2021nnu} run on the train split in isolation, with the exception of AMOS, for which we followed the AMOS-specific recommendations from the nnUNet extensions~\cite{isensee2023extending}. Crop sizes were selected based on memory constraints and to ensure compatibility with the feature extractors used in our experiments. Below, we describe the specific datasets and any other additional information important for training. We refer readers to the referenced work for download links.

\begin{table}[t]
\centering
\caption{\textbf{Dataset details.} Pre-processing details such as crop size and spacing, alongside splits used for each dataset in the experiments.}
\begin{tabular}{l c c c c c}
\toprule
\textbf{Dataset} & \textbf{Crop Size} & \textbf{Spacing (mm)} & \textbf{\# Train} & \textbf{\# Val.} & \textbf{\# Test} \\
\midrule
AMOS & $(192, 192, 128)$ & $(1.0, 1.0, 1.5)$ & $300$ & $30$ & $30$ \\
BraTS & $(128, 128, 128)$ & $(1.0, 1.0, 1.0)$ & $1020$ & $48$ & $96$ \\
CoW & $(192, 160, 128)$ & $(0.3, 0.3, 0.6)$ & $75$ & $20$ & $30$ \\
HVSMR & $(128, 160, 128)$ & $(0.74, 0.74, 0.8)$ & $23$ & $18$ & $19$ \\
PancreasDG & $(256, 224, 64)$ & $(1.09, 1.09, 4.4)$ & $53$ & $100$ & $100$ \\
Prostate & $(224, 256, 64)$ & $(0.52, 0.51, 1.25)$ & $49$ & $25$ & $42$ \\
\bottomrule
\end{tabular}
\label{tab:preprocessing_datasets}
\end{table}

\subsection{Baseline Implementations}
\label{app:baseline_implementation}
We summarize how each baseline method was implemented for the comparisons in the main paper. For reproducibility, we refer readers to the official open-sourced implementations provided by the authors of each respective method. 

\subpara{nnU-Net.}  
We used the latest publicly available version of nnU-Net from the official repository~\cite{isensee2021nnu}. Training plans were generated \textit{exclusively from the training splits} we randomly selected to prevent any leakage from the validation or test sets. For evaluation, we selected the best-performing validation checkpoint and disabled test-time augmentation to ensure a fair comparison across baselines. Further, its default five-network ensembling was not used in order to achieve a one-to-one comparison.

\subpara{GIN and GIN+IPA.}  
GIN consists of a shallow convolutional block that randomly samples convolutional kernels at each forward pass; we used the authors’ \texttt{GINGroupConv3D} module with default settings, as recommended in the reference code~\cite{ouyang2022causality}.  Since no open-source implementation of IPA is provided, we implemented it following the descriptions in the original work. IPA generates a GPU-based B-spline bias field with user-controlled spacing; we used a spacing of \([16,16,16]\) for all experiments. IPA also incorporates a consistency regularization term, for which we followed the parameterization used in the reference code and tuned its consistency weight on the validation set. 

GIN+IPA effectively increases the batch size because multiple perturbations are applied per input (\eg, an initial batch size of 4 becomes an effective batch size of 12). This increase in batch size in addition to the B-spline computation causes a significant training time increase as demonstrated in Table~\ref{tab:runtime}. Additionally, both GIN and GIN+IPA were \textit{highly} sensitive to initialization, necessitating the use of multiple random seeds to avoid divergent training. We trained GIN and GIN+IPA with three different random seeds each, and selected the corresponding models with the best performance on the validation set.

\subpara{MixUp.}  
MixUp~\cite{zhang2017mixup} interpolates pairs of samples within each batch, producing soft labels that encourage smoother decision boundaries. We applied MixUp with a probability $p$ that was tuned on the validation set and sampled interpolation coefficients from the recommended \(\text{Beta}(0.4, 0.4)\) distribution.

\subpara{CutOut.}  
CutOut~\cite{devries2017improved} masks out randomly positioned cubic regions from both the image and its label, reducing the model’s reliance on small, highly discriminative regions. The size of the masked cube was set to \(0.33 \times\) the crop size along each dimension (rounded to the nearest integer). CutOut was applied with a probability $p$ tuned on the validation set. Because the input was scaled to \([0,1]\), we zero-centered the volume before applying CutOut and then shifted it back to \([0,1]\), producing a neutral ``gray'' cube instead of a zero-valued region. This stabilized training when using batch normalization.

\subpara{DSBN.}  
Domain-Specific Batch Normalization~\cite{zhou2022generalizable} replaces standard batch normalization with multiple domain-specific branches. Since no domain labels are provided, domains were assigned based on whether the Bézier intensity transformation applied to the foreground increased or decreased the intensity. Transformations were applied with a probability $p$ tuned on the validation set. %

\subpara{MixStyle.}  
MixStyle~\cite{zhou2021domain} mixes feature statistics within each batch to simulate domain variation. Because explicit domain labels are not available, we mixed statistics randomly between samples in the batch. We applied MixStyle to the UNet encoder. We applied the MixStyle with probability $p$ tuned on the validation set.

\subpara{RSC.}  
RSC~\cite{huang2020self} requires two forward passes: one to compute standard gradients and another in which the top 20\% of dominant gradients are masked. This discourages the model from over-relying on a small set of highly discriminative features. As high-level features emerge in deeper encoder layers, we applied RSC at the deepest encoder block. We applied RSC with probability $p$ tuned on the validation set.

\subpara{Stable Feature Boosting (SFB).}
SFB~\cite{eastwood2023spuriosity} approaches the domain generalization problem by training two separate model: unstable and stable. At test time, SFB uses a calibrated stable model to produce pseudo-labels. These pseudo-labels serve to supervise the unstable model during test time. A bias correction procedure is used to mitigate the wrong predictions made by the stable feature during test time. In our context, the "stable" model is a model trained on the stable representations only and the "unstable" model is a model trained on the unstable representations only. We first calibrate the stable model via temperature scaling on the validation set. For each test volume, the stable model products calibrated softmax probabilities. The unstable model is then adapted over 3 outer rounds: in each round, it is finetuned for $S$ gradients steps using a KL divergence loss against the pseudo-labels with a learning rate of $LR$. Unstable model's predictions are bias corrected. Final segmentation is the argmax of the pseudo-labels after all 3 rounds. $S$ and $LR$ are tuned on the validation set.

\subpara{Alpha Tuning.}
We interpreted Alpha Tuning~\cite{anti-causal} as a linear combination between a stable model and a unstable model. Specifically, we fused the logits according to $\text{final\_logits} = (1 - \alpha) \cdot h^{\text{stable}}_{\theta} + \alpha  \cdot h^{\text{unstable}}_{\theta}$ where $\alpha$ is tuned on the validation set.

\subpara{Entropy Minimization (Ent. Min.).}
TENT~\cite{wang2020tent} adapts a pretrained model at test time by minimizing the entropy of its predictions on each test volume. Only the batch normalization (BN) parameters are updated, while all other weights are frozen. For each test volume, we reset the model to its pretrained weights and run $S$ gradient steps of SGD with learning rate of $LR$, where each step is a random spatial crop. $S$ and $LR$ are tuned on the validation set.

\end{document}